\documentclass[11pt, UKenglish]{article}
%%%APPROX 2024 submission
\usepackage{microtype}
\usepackage{fullpage}
\usepackage{amsmath,amsthm,amssymb}
\usepackage{xfrac}
\usepackage{mathtools}
\usepackage{enumerate}
\usepackage{bbm}
\usepackage{mathbbol}
\usepackage[sort&compress,numbers]{natbib}
\usepackage{thm-restate}
\usepackage{float} 
\usepackage[algosection,ruled,lined,linesnumbered,longend]{algorithm2e}
\usepackage{algcompatible}
%\usepackage{algorithmic}   %replacd by algcompatible
%\tracinggroups=1
%\errorcontextlines=10
%\tracinggroups=1 
%\tracingnesting=1
%\theoremstyle{definition}
\newtheorem{definition}{Definition}
\newtheorem{lemma}{Lemma}

\newtheorem{theorem}{Theorem}
\newtheorem{claim}{Claim}

\newcommand{\0}{\mathbb{0}}

\usepackage{xcolor}
\colorlet{agcolor}{blue!70!white}
\colorlet{jkcolor}{green!60!black}

\usepackage[textwidth=20mm]{todonotes}

\bibliographystyle{plainurl}% the mandatory bibstyle

\usepackage{comment}
\usepackage{hyperref}
\usepackage{cleveref}

\usepackage{graphicx}
\graphicspath{{./figs/}}
% Used for displaying a sample figure. If possible, figure files should
% be included in EPS format.
%
% If you use the hyperref package, please uncomment the following line
% to display URLs in blue roman font according to Springer's eBook style:
%\renewcommand\UrlFont{\color{blue}\rmfamily}
%\usepackage{thm-restate}
%\usepackage{thmtools}

%\usepackage{aliascnt}

\usepackage{amsmath,amssymb,amsthm}

\makeatletter
\newcommand{\leqnomode}{\tagsleft@true\let\veqno\@@leqno}
\newcommand{\reqnomode}{\tagsleft@false\let\veqno\@@eqno}
\makeatother

%\newtheorem{corollary}{Corollary}
%\newtheorem{lemma}{Lemma}
%\newtheorem{definition}{Definition}
%\newtheorem{theorem}{Theorem}

%\newaliascnt{claim}{lemma}

% \newtheorem{claim}{Claim}
%       \numberwithin{claim}{lemma}

% \newtheorem{proposition}{Proposition}
% \newtheorem{remark}{Remark}
% \newtheorem{observation}{Observation}

%\aliascntresetthe{claim}

\usepackage{wrapfig}

\usepackage{comment}
%\spnewtheorem{observation}[theorem]{Observation}{\bfseries}{\itshape}
\usepackage[algosection,ruled,lined,linesnumbered,longend]{algorithm2e}
%\floatname{algorithm}{Procedure}
% \newcommand{\algorithmicrequire}{\textbf{Input:}}
% \newcommand{\algorithmicensure}{\textbf{Output:}}
\usepackage{algcompatible}
\usepackage{color}
\usepackage{bbm}
\usepackage{mathbbol}
\usepackage{tabularx}
\usepackage{enumerate}

\usepackage{thmtools}
\usepackage{xcolor}
\colorlet{agcolor}{blue!70!white}
\colorlet{jkcolor}{green!60!black}

\usepackage[textwidth=20mm]{todonotes}
\usepackage[utf8]{inputenc}

% \newcommand{\ECT}{\ensuremath{\textsc{ECT}}}

%\usepackage{titlesec}
%\usepackage[numbers,sort&compress]{natbib}
%\usepackage[bookmarks=true,colorlinks=true,citecolor=blue,linkcolor=red]{hyperref}

%\addto\extrasenglish{%
  %\renewcommand{\lemmaautorefname}{Lemma}%
%   \renewcommand{\algorithmautorefname}{Algorithm}
%   \newcommand{\definitionautorefname}{Definition}
%   \newcommand{\lemmaautorefname}{Lemma}
%   \newcommand{\corollaryautoref}{Corollary}
%   \newcommand{\claimautorefname}{Claim}
%   \newcommand{\propositionautorefname}{Proposition}
%     \newcommand{\sectionautoref}{Section}
%     \def\subsectionautorefname{Section}%
% %    \newcommand{\subsectionautoref}{\sectionautoref}
%     \newcommand{\remarkautorefname}{Remark}
%}

\newcommand{\epsill}{\frac{1}{12}}   %%%was \epsilon
\newcommand{\tepsill}{\frac{1}{6}}      %%%%2\epsilon
\newcommand{\threpsill}{\frac{1}{4}}     %%%%3\epsilon
\newcommand{\epsillinverse}{12}
\newtheorem{prop}[theorem]{Proposition}
\title{A Constant Factor Approximation for Directed Feedback Vertex Set in Graphs of Bounded Genus} %TODO Please add

% \titlerunning{Dummy short title} %TODO optional, please use if title is longer than one line

\author{Hao Sun\thanks{University of Alberta, Edmonton, Canada. \texttt{hsun14@ualberta.ca}}
%%%%\thanks{University of Waterloo, Waterloo, Canada. \texttt{hao.sun@uwaterloo.ca}}
}

\begin{document}

\date{}
\maketitle              % typeset the header of the contribution
\begin{abstract}
 The minimum directed feedback vertex set problem consists in finding the minimum set of vertices that should
be removed in order to make a directed graph acyclic. This is a well-known NP-hard optimization problem
with applications in various fields, such as VLSI chip design, bioinformatics and transaction processing deadlock prevention and node-weighted network design.  %% and MinFs2.  %%%  \cite{Festa2009FeedbackSP,VPI,BY2.4}. 
%%MinFS2 is the problem of removing a minimum weight set of equations from an infeasible system to obtain feasibility.
We show  a constant factor approximation for the directed feedback vertex set problem in graphs of bounded genus.
  %\keywords{}  
\end{abstract}
\clearpage
\pagebreak
\section{Introduction}
In the  directed feedback vertex set problem (DFVS), we are given a (node-weighted) directed graph $G=(V,E)$ with costs $ c_v \ \forall v \in V$ and wish to find a minimum cost set $X$ for which $G \backslash X$ is acyclic.
%%Both the directed feedback vertex set problem (DFVS) and the undirected feedback vertex set (FVS) problem are NP-hard and 
DFVS is one of Karp's original 21 NP-hard problems \cite{Karp1972}. 
The DFVS problem has many applications including deadlock resolution \cite{DeadLockApp},
 VLSI chip design \cite{VLSIapp} and program verification \cite{ProgramVerifyApp}.
%%% and have been extensively studied from the perspective of approximation algorithms [1, 12], parameterized algorithms [6, 8, 19], exact exponential time algorithms [23, 29] as well as graph theory [11, 24].

A 2-approximation for (undirected) FVS is given in \cite{UndirFVS2apx}. %%Undirected FVS admits an efficient polynomial-time approximation scheme (EPTAS) in $H$-minor free graphs for any fixed graph $H$ \cite{FominBidim}. 
DFVS  has a 2-approximation in tournaments \cite{Tour2Apx}  and bipartite tournaments \cite{ZuylenBip}, is polynomial-time solvable on graphs of bounded treewidth, has a 2.4-approximation in planar graphs \cite{BY2.4} and has an  $O(\log{n}\log{ \log{n}} )$-approximation in general graphs \cite{Even1998Logn}. 
DFVS does not have an $O(1)$-approximation under the unique games conjecture \cite{DFVSisHard}. 
The genus of a graph is the minimal integer $g$ such that the graph can be drawn without crossing itself on a sphere with $g$ handles.

The following is the natural LP for DFVS and its dual, where $\mathcal{C}$ is the set of directed cycles of our graph.
 \begin{center}
\begin{minipage}{.44\textwidth}

\begin{align}
  \min ~~ & c^Tx \tag{$\mbox{P}_{  \text{DFVS} }$} \label{DFVS LP} \\
  \text{s.t.} \ \   & x(C) \geq 1 \quad \forall \ C \in \mathcal{C} \label{eq_ECT}   % \notag 
  \\
          & x \geq 0  \enspace    \notag
\end{align}
\end{minipage} %
\hspace*{1ex}
\vline
\hspace*{1ex}
\begin{minipage}{.44\textwidth}
\begin{align}
  \operatorname{max} ~~ & \mathbb{1}^Ty  \tag{$\mbox{D}_{ DFVS }$} \label{DFVSlp:d2} \\
  \text{s.t.} ~~ & \sum_{C \in \mathcal{C}, v \in C} y_C \leq c_v \quad \forall v \in V(G)
           \notag \\
          & y \geq \0 \enspace . \notag
\end{align}
\end{minipage}
\end{center}
%% The natural LP \eqref{DFVS LP} 

Given that constant approximations for DFVS exist for planar graphs, one naturally wonders if  DFVS admits constant approximations in bounded genus graphs.
We answer this question positively. 

\begin{restatable}{theorem}{thmDFVSapprox}\label{DFVSisPolyApx}
For any fixed genus $g$, there is a polynomial-time $O(g)$-approximation for DFVS for graphs of  genus $g$. 
Moreover, the algorithm returns a DFVS with cost $O(g)$ times the optimum solution to \eqref{DFVS LP}.
%%%In fact, the natural LP \eqref{DFVS LP} has integrality gap $O(g)$ for graphs of genus $g$.
\end{restatable}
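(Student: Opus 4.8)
The plan is to prove Theorem~\ref{DFVSisPolyApx} by an LP-relative primal--dual argument in the style of the Goemans--Williamson method for feedback problems in planar graphs, with the genus entering only through an Euler-formula correction in the charging. We never solve \eqref{DFVS LP}: the algorithm produces a $y$ that is feasible for \eqref{DFVSlp:d2}, and weak duality then certifies the bound against the optimum of \eqref{DFVS LP}. It is convenient to first pass to an arc formulation by splitting each vertex $v$ into an arc $(v^-,v^+)$ of cost $c_v$ and rerouting every arc $(u,v)$ as $(u^+,v^-)$ of cost $+\infty$; carrying out the split inside a small disk around $v$ keeps the genus at most $g$ and preserves directed cycles bijectively. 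Taking the surface dual then sends directed cycles to directed cuts, so ``hit every directed cycle'' becomes a restricted dijoin problem on a graph embedded in $\Sigma_g$ --- the \ECT{}/\UECT{} viewpoint --- and I would keep both the primal and dual pictures available, using whichever makes a given step cleanest.

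The core is a primal--dual loop. Maintain a set $X$ and a dual $y$, both initially empty. While the current digraph minus $X$ still contains a directed cycle, pick a carefully chosen family $\mathcal{F}$ of currently uncovered directed cycles --- the surface analogue of ``innermost'' cycles, i.e.\ contractible directed cycles bounding inclusionwise-minimal disks of the residual picture --- raise $y_C$ uniformly over $C\in\mathcal{F}$ until some constraint $\sum_{C\ni v} y_C=c_v$ becomes tight, and add every such $v$ to $X$; at termination run reverse-delete so that $X$ is inclusionwise minimal. Feasibility is clear, and the usual identity gives $\sum_{v\in X}c_v=\sum_{C}y_C\,|X\cap C|$, so it suffices to show that in each phase the left side grows by at most $O(g)$ times the growth of $\sum_C y_C$, i.e.\ $\sum_{C\in\mathcal{F}}|X\cap C|\le O(g)\,|\mathcal{F}|$. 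This is an incidence count on the surface: after reverse-delete each $v\in X$ witnesses a private residual cycle, the grown regions tile a subsurface of $\Sigma_g$, and Euler's formula $n-m+f=2-2g$ bounds the relevant average degree by a constant plus an $O(g)$ term. That is the only place the genus enters, and summing over phases gives $\sum_{v\in X}c_v\le O(g)\sum_C y_C$, which by weak duality is $O(g)$ times the optimum of \eqref{DFVS LP} and hence $O(g)$ times the integral optimum of DFVS.

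I expect two genuine difficulties. The first is the topological counting itself: on a surface of positive genus a non-contractible directed cycle bounds no disk, so the clean planar ``innermost region'' bookkeeping degrades and such cycles must be kept out of $\mathcal{F}$ and handled separately. Here I would exploit that a genus-$g$ graph contains only $O(g)$ topologically independent pairwise-disjoint non-contractible cycles, so once the contractible part is resolved the residual core is controllable --- concretely, cut $\Sigma_g$ along $O(g)$ noose curves to a planar diagram with a distinguished outer face $\fouter$, note that the bounded-radius ring around that boundary has bounded treewidth, and run an exact dynamic program there; since predicates such as ``is a disjoint union of directed cycles'' and ``is bipartite'' are \MSOtwo{}-expressible, this can be delegated to the \CMSOtwo{}/Courcelle machinery rather than written by hand. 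The second difficulty is implementational: performing the selection of $\mathcal{F}$, the tightness test, and the reverse-delete in polynomial time while respecting the $+\infty$ arcs from the split --- routine, but it must be checked. Assembling these pieces yields a polynomial-time algorithm returning a directed feedback vertex set of cost $O(g)$ times the optimum of \eqref{DFVS LP}, which is the theorem.
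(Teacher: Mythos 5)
Your first half---the primal--dual loop over contractible ``innermost'' directed cycles, the witness/reverse-delete charging, and the Euler-formula correction giving an $O(g)$ factor per phase---is essentially the paper's Theorem~\ref{facialAlgorithm}, and that part is sound. The split into arcs and the dual-cut viewpoint are a cosmetic reformulation.

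The gap is in the second half, the handling of non-contractible (non-facial) directed cycles. After the contractible cycles are hit, there can still be many non-contractible directed cycles to destroy, and the bound ``only $O(g)$ topologically independent pairwise-disjoint non-contractible cycles'' does not control this: those cycles can share vertices and wind in many homotopy classes. Cutting $\Sigma_g$ along $O(g)$ nooses gives a planar \emph{diagram}, but the graph obtained by slicing along the nooses is not the original graph, and directed cycles that cross a noose no longer correspond to cycles in the cut-open picture. More importantly, the claim that the ``bounded-radius ring around the boundary has bounded treewidth'' is unjustified and, as stated, false in general: bounded genus does not imply bounded treewidth (toroidal grids), and nothing in the reduction has shrunk the graph. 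So the proposed exact dynamic program via \MSOtwo{}/\CMSOtwo{} has no bounded-width decomposition to run on, and this step does not go through. The paper instead handles the non-facial residual by \emph{solving} \eqref{DFVS LP} (contrary to your explicit decision not to), finding a tight dicycle $C^1$ with $\sum_{v\in C^1}\hat x_v=1$, using the fractional values as distances to locate cheap separators between the ``left'' and ``right'' neighbourhoods of $C^1$ (\autoref{InOutFar1}, \autoref{InOutFar}), then performing a surgery along a non-facial closed curve (\autoref{surgery}, \autoref{surgreduc}) to strictly decrease the genus, and recursing. That genus-reduction step is the key idea your proposal is missing; without it or a substitute, the non-contractible case is not resolved.
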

From the proof of \autoref{DFVSisPolyApx}, it is clear that the algorithm in \autoref{DFVSisPolyApx} runs in time $O(g) \text{poly}(|V(G)|)$.

For uniform costs, the dual LP \eqref{DFVSlp:d2}  is the natural LP of the \emph{ dicycle packing problem}.
The dicycle packing problem is the problem of finding the maximum number of vertex disjoint dicycles of a graph.
Schlomberg et al. \cite{schlomberg2023packing} show that the LP gap of the natural LP for dicycle packing is at most $\Omega(\frac{1}{g^2 \log g} )$ on any  graph of genus $g$.  %%%Theorem 6 in their paper
Our result then also implies that the minimum size of a DFVS is at most   $O(g^3 \log g)$ the size of a maximum dicycle packing. 

\subsection{Our techniques}

Informally speaking, for a (directed) graph embedded on a surface where each directed cycle bounds a region homeomorphic to the plane, one can apply the same primal-dual techniques in \cite{GoemansW1998, BY2.4} to obtain a constant factor primal-dual approximation.

In the other case, our algorithm will use the natural LP for DFVS to look for a  ``separator" \cite{BiDimDemain, FominBidim,FominBidimenKer} $S \subset V$ of cost at most a constant times the optimal DFVS such that $G\backslash S$ is of smaller genus. 
%%In the other case 
We obtain a directed cycle $C$, the removal of which results in a surface of one smaller genus. 
Traversing along the dicycle, we may define a ``left" and ``right" side of the dicycle.  
Like in \cite{Even1998Logn}, we solve the DFVS LP and use the LP values as distances.

If there is no short path leaving $C$ from the left and entering $C$ from the right and vice versa then there is a small separator $S$ such that  each dicycle of $G \backslash S$ either does not use any ``left arc" that is, an arc coming in or leaving $C$ from the left, or does not use any ``right arc" that is, an arc coming in or leaving $C$ from the right. 
$G$ with all left (resp. right) arcs deleted is of genus at least one less so inductively we can solve within a constant factor DFVS on $G$ with all left (resp. right) arcs deleted.
These two solutions together with $S$ form a DFVS of constant times more than the optimum.
If such short paths exist but all starting points of such paths and ending points of such paths are far apart the analysis is similar. 

The final case is where there are short paths $P_1, P_2$ leaving $C$ from the left and entering $C$ from the right (or vice versa) and  the starting point of $P_2$ is close to the endpoint of $P_1$.
We show that $P_1$ is ``far" from $P_2$ so to speak  (we are using directed distances so this is not the same as $P_2$ being far from $P_1$) and compute a suitable separator. 
We show that the resulting strongly connected components are of smaller genus.
We then combine approximations for different components to give an approximation for the original graph.

The presentation in this paper is focused on demonstrating linear dependence on the genus rather than optimizing the constant in \autoref{DFVSisPolyApx}.

\section{Preliminaries}

%%%In the directed feedback vertex set problem (DFVS), we are given a directed graph $G=(V,E)$ and costs $ c_v \ \forall v \in V$ and wish to find a minimum cost set $X$ for which $G \backslash X$ is acyclic.
%% Informally speaking our algorithm will use the natural LP for DFVS to look for a  separator \cite{BiDimDemain, FominBidim,FominBidimenKer} $S \subset V$ of cost at most a constant times the optimal DFVS such that $G\backslash S$ is planar. 
%%To do so we use the 

In our figures, we will use the representation of the torus by  taking the unit square $ [0,1] \times [0,1] $ and identifying the two pairs of edges $ \{0\} \times [0,1] , \ \{ 1 \} \times [0,1] $  and $  [0,1] \times \{0 \} , \   [0,1]  \times \{ 1 \} $, that is, the point $(0,p) $ is identified with $(1,p)$ and the point $(q,0)$ is identified with the point $(q,1)$.

Throughout this paper, all surfaces are orientable and  smooth and all curves are piece-wise smooth. Distances on surfaces will refer to the geodesic (shortest path on the surface) distance.  
% We use the classification of smooth orientable surfaces.
% \begin{theorem}
% \cite{RydholmClassSur}
% A smooth orientable surface $Q$ is diffeomorphic to the $g$-genus torus for some $g$.
% \end{theorem}
% From the above theorem, one can show that smooth surfaces $Q$ have the property that for each $v \in Q$  there is an open ball $B_Q(v,r_0)$ of some small radius $r_0 >0$ in $Q$ and a diffeomorphism $\psi$ from  $B_Q(v,r_0)$  to the open disk $B_{\mathbb{R}^2}(0,r_0)$ of radius $r_0$ about the origin in the two-dimensional plane such that $\psi(v)=(0,0)$ and $\psi$ preserves distances from $v$, that is $  \operatorname{dist}_Q(v,x)= \| \psi(v) - \psi(x) \|$, where $\operatorname{dist}_Q(v,x)$ is the geodesic distance from $v$ to $x$ in $Q$.
It is well known (see for instance \cite{RydholmClassSur}) that a smooth orientable surface $Q$ is diffeomorphic to the $g$-genus torus for some $g$. For $X \subset Q$, denote $\operatorname{cl}(X)$ as the closure of $X$ in $Q$.
We henceforth assume our surface is a $g$-genus torus for some $g$.

%%%% Suppose we had a graph $G$ embedded on the torus, let us embed $G$ in $Q$ such that no vertex lies on an edge of $Q$ i.e. $ (\{ 1 \} \times [0,1] ) \cup{} [0,1]  \times \{ 1 \}  $,  and no edge intersects the points $(0,0), (0,1),(1,1), (1,0)$. 
Let us call a cycle %$C$ 
of $G$, or a closed curve $C$  embedded on a surface $Q$ \emph{facial} if  it %%$C$ 
bounds a region $\operatorname{inside}(C)$ of the surface homeomorphic to the plane.
If the genus of $Q$ is greater than 0, call $\operatorname{inside}(C)$ the \emph{inside region} of $Q \backslash C$.
%%It is well-known that a cycle is facial if and only if it partitions the surface into 2 connected components.
For any set $F \subset V$, define $G^F$ to be the \emph{residual graph}, that is, the subgraph of $G$ induced by those vertices that lie in a dicycle of $G \backslash F$. 

% In general, given a a (node-weighted) directed graph $G=(V,E)$ with costs $ c_v \ \forall v \in V$ a set $\mathcal{C}$ of cycles of a digraph $G$, we define the $\mathcal{C}$-hitting set problem as the problem of finding a minimum cost set $X$ such that $ X \cap C \neq \emptyset \ \forall C \in \mathcal{C}$.
% Given a graph $G$ embedded on  a surface $Q$ and  regions $R_1,R_2$ of $Q$ homeomorphic to the open disk, 
% define the $(R_1,R_2)$-cycle hitting problem as the $\mathcal{C}$-hitting set problem where $\mathcal{C}$ is the set of all dicycles $C$ such that at least one of $R_1,R_2$ is contained in the inside region of $Q \backslash C$.
%Let $\hat{x}$ be an optimal extreme point solution for the FVS LP \eqref{DFVS LP}. By standard LP theory, $\hat{x}$ has rational coordinates. Let $N \in \mathbb{N}$ be such that $N\hat{x}$ is integral. 
%Let us construct a solution $\hat{x}$ by rounding $\bar{x}$ to the nearest multiple of $1/N$ for some large $N$. % construct $G_{\bar{x}}$ from $G$ as follows

%%Given a graph $G=(V,E)$ and node weights $\omega:V -> \mathbb{Z}_+ \cup \{0 \} $ we associate each node $v$ of $G$ with 

% %% Define the $\hat{s}$-distance of the path $v_0,v_1,..,v_l$ to be 

%%\subsection{Extension to (orientable) bounded genus surfaces}

%%We generalize \autoref{DFVSgapTorus} to the following.
\section{Hitting the facial cycles of a digraph} 
%%For a facial cycle $C$ embedded on a surface $Q$ of genus greater than 1, call the region of $Q \backslash C $  homeomorphic to an open disk the inside of $C$.
%%Special case where $\mathcal{C}$ is the set of cycles bounding a particular region.

In general, given  a (node-weighted) directed graph $G=(V,E)$ with costs $ c_v \ \forall v \in V$ a set $\mathcal{C}$ of cycles of a digraph $G$, we define the $\mathcal{C}$-hitting set problem as the problem of finding a minimum cost set $X$ such that $ X \cap C \neq \emptyset \ \forall C \in \mathcal{C}$.
In this section, we are concerned with when $\mathcal{C}$ is the set of facial cycles of our graph.

Given a digraph $G$ embedded on a surface $Q$, 
We  show how to obtain an $O(g)$-approximation for the problem of finding a  minimal hitting set for the set of facial dicycles of a graph embedded on any surface $Q$ of genus $g$.

% \begin{theorem}
% There is a polynomial-time $O(1)$-approximation for DFVS for graphs embedded on any surface $Q$ of fixed genus. In fact, the natural LP \eqref{DFVS LP} has integrality gap $O(1)$ for graphs embedded on $Q$.
% \end{theorem} 
%%\thmDFVSapprox*
\begin{theorem}\label{facialAlgorithm}
    For a graph $G$ embedded on a surface $Q$ of genus $g$, there is a polynomial-time $O(g)$-approximation for the problem of hitting facial dicycles of $G$. 
Moreover, the algorithm returns a DFVS with cost $O(g)$ times the optimum solution to \eqref{DFVS LP} where $\mathcal{C}$ is the set of facial dicycles.

%%Further given regions $R_1,R_2$ of $Q$ homeomorphic to the open disk, there is an $O(1)$-approximation for the $(R_1,R_2)$-cycle hitting problem. 
Further, if $G$ is embedded in a way such that there exist regions $R_1,R_2$ of $Q$ homeomorphic to the open disk, such that the inside region of any facial dicycle contains at least one of $R_1,R_2$, then there is an algorithm that returns the optimum solution to \eqref{DFVS LP} where $\mathcal{C}$ is the set of facial dicycles.
\end{theorem}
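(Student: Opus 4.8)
The plan is to use the primal-dual schema of Goemans--Williamson / Bar-Yehuda et al.\ \cite{GoemansW1998, BY2.4}, exploiting the fact that facial dicycles, each bounding a disk region, behave essentially like the "clockwise circuits" in the planar DFVS approximation. First I would set up the uncrossing structure: given two facial dicycles whose inside disks overlap in a nontrivial way, I want to argue that one can replace them by facial dicycles that are "laminar" or at least that a minimal violated facial dicycle can be chosen whose inside region is inclusion-minimal. The key combinatorial fact to establish is that if $C_1, C_2$ are facial dicycles with $\operatorname{inside}(C_1) \cap \operatorname{inside}(C_2) \neq \emptyset$ and neither inside region contains the other, then the boundary pieces can be rerouted to produce a facial dicycle contained in $\operatorname{inside}(C_1) \cup \operatorname{inside}(C_2)$ and strictly smaller than at least one of them; this is where the genus enters, because on a surface of genus $g$ the union of two disks need not be a disk, so one pays an $O(g)$ factor (or handles a bounded number of topological exceptions) rather than getting the clean planar laminarity.

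Concretely, the algorithm would maintain a dual solution $y$ to \eqref{DFVSlp:d2} (restricted to facial dicycles) and a partial hitting set, repeatedly: find an inclusion-minimal facial dicycle $C$ not yet hit, raise $y_C$ until some vertex $v$ on $C$ goes tight, add $v$ to the solution; at the end do reverse-delete to remove redundant vertices. The analysis bounds the cost of the returned set by $\sum_C y_C$ times the maximum, over the chosen $C$'s, of $|X^* \cap C|$ where $X^*$ is the final solution — so the approximation ratio is governed by how many times an inclusion-minimal facial dicycle can be hit by the reverse-delete solution. In the planar case this is $O(1)$; here I would argue it is $O(g)$ by a topological/Euler-characteristic argument: an inclusion-minimal facial dicycle whose inside disk contains no "removed" region can meet the reverse-delete solution only $O(g)$ times, since each extra intersection forces an additional non-contractible feature inside. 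The reverse-delete step is what makes the charging work, exactly as in \cite{BY2.4}.

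For the second, stronger statement, the hypothesis is that every facial dicycle's inside region contains one of two fixed open disks $R_1, R_2$. This means the facial dicycles split into (at most) two families $\mathcal{C}_1, \mathcal{C}_2$ according to which $R_i$ their inside contains (breaking ties arbitrarily), and within each family the inside regions are \emph{nested}: if $C, C' \in \mathcal{C}_i$ then $\operatorname{inside}(C)$ and $\operatorname{inside}(C')$ both contain $R_i$, and by the uncrossing argument above one can assume they are comparable, so each $\mathcal{C}_i$ is a laminar family. For a single laminar family of dicycles the hitting-set LP \eqref{DFVS LP} is totally dual integral — the constraint matrix is (an interval matrix / network matrix, hence) totally unimodular after the laminar ordering — so the LP has an integral optimum and can be solved exactly in polynomial time. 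I would then combine the two families: because the LP over $\mathcal{C}_1 \cup \mathcal{C}_2$ decomposes as a 2-layer laminar structure, its constraint matrix is still a network matrix (two nested chains sharing columns), hence TU, so \eqref{DFVS LP} with $\mathcal{C}$ the set of all facial dicycles is solved to optimality, e.g.\ by linear programming followed by rounding the (already integral) optimal vertex, or directly by a combinatorial min-cost covering algorithm on the laminar family.

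The main obstacle I anticipate is the uncrossing/laminarity step on a higher-genus surface: unlike the plane, two overlapping disks on a genus-$g$ surface can fail to have a disk union, and a facial dicycle can wind around handles in its complement, so proving "an inclusion-minimal unhit facial dicycle meets $X^*$ at most $O(g)$ times" requires a careful topological argument (probably via counting how the arcs of $C$ and the faces they bound interact with the genus, or an induction peeling off one handle at a time as sketched for the main theorem). Getting the linear — rather than polynomial — dependence on $g$ in this charging is the delicate part; for the second statement the nesting hypothesis sidesteps this entirely, so there the only real content is recognizing the network-matrix / TU structure.
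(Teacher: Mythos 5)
Your high-level framing (primal--dual on face-minimal facial dicycles, reverse delete, charging argument against a laminar witness family, with the genus entering through a topological count) is aligned with the paper, but the analysis you describe has a structural error that the paper's actual argument is carefully built to avoid.

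First, the algorithm. You propose raising \emph{one} dual variable $y_C$ per iteration for a single inclusion-minimal unhit facial dicycle $C$. The paper (and the Goemans--Williamson scheme it follows) raises \emph{all} face-minimal dicycles' duals simultaneously in each iteration. This matters for the analysis: if you raise one $y_C$ at a time, the guarantee from the primal--dual template is $\max_C |X^* \cap C|$; if you raise a whole family $\mathcal{C}_t$ uniformly, the guarantee is the \emph{average} $\frac{1}{|\mathcal{C}_t|}\sum_{C\in\mathcal{C}_t}|X^*\cap C|$ over each simultaneously-raised family. You then claim that a single face-minimal dicycle can meet the reverse-delete solution only $O(g)$ times (``each extra intersection forces an additional non-contractible feature inside''), and assert that this max is $O(1)$ in the planar case. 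Neither claim is true: even in the plane, a single face-minimal cycle of length $k$ can have all $k$ of its vertices in the returned solution (each hit vertex has its own witness cycle elsewhere), so the pointwise max is unbounded. The planar $O(1)$ and the paper's $O(g)$ come entirely from the averaging. The mechanism the paper uses to bound the average is the \emph{debit graph}: a bipartite graph between the face-minimal cycles $\mathcal{C}_t$ and the hit vertices, embedded on $Q$, whose edge count is bounded by Euler's formula for bipartite graphs of genus $g$ ($|E|\le(2+g)|V|-4$), summed over the pieces of the laminar witness tree. That construction -- not a per-cycle topological bound -- is the missing ingredient.

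Second, the exact/improved part. You try to prove it via total unimodularity, arguing that the facial dicycles whose inside contains $R_i$ form a nested (laminar) family and hence the constraint matrix over $\mathcal{C}_1\cup\mathcal{C}_2$ is a network matrix. This step does not go through: two facial dicycles both containing $R_1$ in their inside regions can still cross, so the full constraint set is \emph{not} laminar. The uncrossing lemma (\autoref{laminarwitness}) only yields a laminar family of \emph{witness} cycles for the hit vertices, not laminarity of the entire set of LP constraints, and one cannot drop crossing constraints from the LP when proving TU of its matrix. The paper's actual argument for this case is much more elementary: since face-minimal dicycles have disjoint insides and each inside must contain $R_1$ or $R_2$, there can be at most two face-minimal dicycles in any iteration ($|\mathcal{C}_t|\le 2$), and a separate counting lemma bounds the laminar witness family by a constant ($|\mathcal{A}|\le 8$), giving a constant-factor guarantee from the same primal--dual scheme, not an LP-exactness proof. (Note also that the paper's theorem statement claims ``optimum'' for this case, but the proof as written only establishes an $8$-approximation to the LP; your TU idea is an honest attempt at the stronger claim, but it does not hold as argued.)
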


% \begin{theorem}
% There is a polynomial-time $O(1)$-approximation for DFVS for graphs embedded on any surface $Q$ of fixed genus. In fact, the natural LP \eqref{DFVS LP} has integrality gap $O(1)$ for graphs embedded on $Q$.
% \end{theorem} 
%%\thmDFVSapprox*
%\begin{restatable}[]{definition}{tilingo}
%We call a set $\mathcal{M}$ of even cycles of $H$ a \emph{pseudo-tiling} if it corresponds to the union of a matching $M$ on odd degree vertices of $H^*$ and a set $V'$ of even degree vertices of $H^*$ (see Figure \ref{tiletomat} iii),iv)).  We will say that a face $f_v$ of $H$ is \emph{covered} by
%$\mathcal{M}$ if the corresponding vertex $v$ is covered by $M \cup V'$, that is either $v \in V'$, or $v$ is the endpoint of an edge of $M$. 
%If the vertex corresponding to the infinite face of $H$, $ v_{h_\infty} $ is not covered by $M \cup V'$, we call $\mathcal{M}$ a \emph{tiling}.
% 
%\par Let us call a tiling $\alpha$-quasi-perfect if it covers all even finite faces, a $\beta$-fraction of the odd finite faces are covered by the tiling, a $\psi$-fraction of the finite faces of $H$ are even, and $\beta (1-\psi)+2\psi \geq \alpha$.
%\end{restatable}
%\end{definition}
% \begin{restatable}{theorem}{thmcmd}\label{2o3quasperexist}
% Let $H$ be a $2$-compression of some planar graph $G$  that has an even cycle and contains no pockets.
% Then $H$ has a $2/3$-quasi-perfect tiling.
% \end{restatable}
%%\thmcmd*
\begin{proof}
% Let $Q$ be a surface of genus $g $.   
% Let $G$ be a digraph embedded on $Q$.
We first show that DFVS has an $O(g)$-approximation.
If $C$ is a facial cycle bounding a face of $G$, call $C$ \emph{face minimal}.
Note that if $G$ contains a facial cycle, then it must contain a face minimal cycle by the following argument.
%%We associate each facial cycle $C$ with the region $R$ of $Q$ homeomorphic to the plane that it bounds.
Let $C$ be a dicycle such that $\operatorname{inside}(C)$ is a minimal (by containment) region of $Q$.
Recall that we removed all vertices of $G$ not lying on a dicycle.
In particular, any vertex $w$ inside the region $\operatorname{inside}(C)$ must lie on a facial dicycle $ A_w $.
 $A_w$ cannot be contained entirely in $ \operatorname{cl}( \operatorname{inside}(C) ) $, as  then the region  $A_w$ would be strictly contained in $\operatorname{inside}(C)$.
 Thus, $ \operatorname{inside}( A_w)$ intersects $C$ and there is a dipath $P$ between two nodes $u,v$ of $C$.
%%By minimality, any vertex inside the region of $Q$ bounded by $ C $
If $C$ is not a face, then either there is a vertex $w$ inside the region $R_C$ or there is an edge $uv$ between two nodes $u,v$ of $C$ such that $g(uv) \backslash (g(v) \cup g(u))$ lies in $R_C$.
In either case, there is a dipath $P$ between two nodes $u,v$ of $C$ then $P$ together with either the $u$-$v$ or $v$-$u$ dipath in $C$ forms a cycle bounding a smaller region of $Q$, which is a contradiction.

% Our algorithm mirrors that for graphs embedded on a torus. 
% Again for $\hat{S} \subset V(G)  $, let $G^{\hat{S}}$ denote the subgraph of of $G$  induced by those vertices which are in a dicycle of $G \backslash \hat{S}$. 
% Initialize $\hat{S}= \emptyset$.

% While $G^{\hat{S}}$ contains a facial cycle, increment the dual variables $y_C$ in \ref{DFVS LP} of face minimal cycles $C$ of $G$. When a node of $G$ becomes tight add it to $\hat{S}$. 
% When $G^{\hat{S}}$ contains no facial cycles apply reverse deletion to $\hat{S}$ with respect to the facial cycles of $G^{\hat{S}}$, that is, we consider each node $v$ of $\hat{S}$ in the order it was added and if $ G \backslash (\hat{S} \backslash \{ v \}) $ contains no facial cycles, delete $v$ from $\hat{S}$.

Our algorithm proceeds as follows. 
This is a primal-dual algorithm analogous to the technique of \cite{GoemansW1998} for DFVS in planar graphs.
Given a feasible dual solution $y $ to \eqref{DFVSlp:d2}, let the \emph{residual cost} of node $v \in V$ be $ c_v - \sum_{C \in \mathcal C, v \in C} y_C$.
For $\hat{S} \subset V(G)  $, recall $G^{\hat{S}}$ denotes the subgraph  of $G$  induced by those vertices which are in a dicycle of $G \backslash \hat{S}$.  

Our primal-dual method begins with a trivial feasible dual solution $y=\0$, and the empty, infeasible hitting set  $\hat{S}= \emptyset$.

While $G^{\hat{S}}$ contains a facial cycle, increment the dual variables $y_C$ in \ref{DFVS LP} of face minimal cycles $C$ of $G$. When a node of $G$ becomes tight add it to $\hat{S}$. 
When $G^{\hat{S}}$ contains no facial cycles apply reverse deletion to $\hat{S}$ with respect to the facial cycles of $G^{\hat{S}}$, that is, we consider each node $v$ of $\hat{S}$ in the order it was added and if $ G \backslash (\hat{S} \backslash \{ v \}) $ contains no facial cycles, delete $v$ from $\hat{S}$.
Denote by $\bar{S}$ the set $\hat{S}$  at the end of the algorithm.
In other words, we apply the primal-dual method to solve the problem of hitting all facial dicycles of $G$.

%%In the second step using \autoref{DFVSgapGen}, we find a DFVS $\tilde{S}$ of $G^{\bar{S}}$ of cost at most $O(g)OPT_{LP}$.
\begin{algorithm}[t] 
%\begin{algorithm2e}
\caption{MinWeightDirectedFVS $(G,c)$}
\label{MWFVS}
\SetKwBlock{ReverseDeletion}{Reverse-Deletion:}
\SetAlgoVlined
\SetKwInOut{Input}{Input}\SetKwInOut{Output}{Output}
\SetKw{KwDownTo}{downto}
\Input{A digraph $G=(V,E)$ with non-negative node-costs $c_v$, for each   $v \in V$. } 
\Output{A Directed FVS $S$ of $G$.}
$S= \emptyset $  \\
 \While{$G^S$ contains a facial cycle}{
Increment all dual variables $y_C$ for face minimal cycles of $G^S$.
 Add all nodes that became tight to $S$.}  
% \textbf{Reverse-Deletion:}
 \ReverseDeletion{
 Let $s_1,s_2,..,s_l  $ be nodes of $S$ in the order they were added. \\
   \For{ $t = l$ \KwDownTo $1$}{ \If{ $ G^{  S \backslash \{ s_t \} } $ contains no facial cycle }{$S \leftarrow S \backslash \{ s_t \} $} }  }
%%%  Use \autoref{DFVSgapGen} to find FVS $\tilde{S}$ of $G^{\bar{S}}$ of cost at most $O(g)OPT_{LP}$ \\
   \Return{$S $}
%\end{algorithm2e}
 \end{algorithm} 
Clearly, $\bar{S} $ is a feasible hitting set for the set of facial dicycles of $G$, we claim it has cost $O(g)OPT_{LP}$.
%%It suffices to prove that  $\bar{S} =  O(g)OPT_{LP}$.  
To do so we apply that standard analysis of primal-dual methods in \cite{PrimalDualMethod, GoemansW1998}.

\begin{theorem}\label{PrimalDualBasic}  (\cite{PrimalDualMethod}): Suppose $S \subset V(G)$ and $y$ is a solution to \eqref{DFVSlp:d2} output by our primal-dual algorithm such that the following holds.
\begin{enumerate}
     \item $y$ is obtained starting with the initial feasible solution $y:=0$ and incrementing some set of dual variables $\{y_C:  v \in \mathcal{C}_t \}$ uniformly and maintaining feasibility of $y$ for iterations $t=1,2,..,l$ for some $l \in \mathbb{N}$. 
     \item  For each iteration $t \in \{1,2,3,..,l \}$, the set $\{y_C:  C \in \mathcal{C}_t \}$ of incremented dual variables satisfies   $ \sum_{C \in \mathcal{C}_t } | S \cap C | \leq \beta |\mathcal{C}_t| $. 
     %Intuitively, the number of nodes of $S$ each dual variable $y_v$ \emph{``pays"} for, $| S \cap (\{v \} \cup{N(v)}) |$  is at most $ \beta$ on average. 
    \item  $ \forall v \in S, \  \ \sum_{C \in \mathcal{C} \  v \in C} y_C = c_v$.
\end{enumerate} 
Then $S$ has cost at most $\beta  \sum_{C \in \mathcal{C}} y_C$, that is at most $\beta$ times the LP value. 
%In fact, the characteristic vector $\hat{x}:= \mathbbm{1}^S$ (that is, $\hat{x}_v=1$ for $v \in S$ and $\hat{x}_v=0$ for $v \notin S$) satisfies $ \sum_{v \in V} w_v \hat{x}_v \leq \beta \sum_{v \in V} y_v $.  
\end{theorem} 

Using \autoref{PrimalDualBasic}, it suffices to prove that during any iteration $t$, the face minimal cycles $\mathcal{C}_t$ of $G^{S_t}$, where $S_t$ is our current hitting set satisfies 
\begin{equation}
    \sum_{C \in \mathcal{C}_t} |\bar{S} \cap C| \leq O(g)|\mathcal{C}_t |.
\end{equation}

 Again we remove nodes of $G$ that do not lie on any dicycle.
 Denote $\bar{S}_t$ to be the nodes of $\bar{S}$ that intersect a cycle of $\mathcal{C}_t$.
 So it suffices to show $ \sum_{C \in \mathcal{C}_t} |\bar{S}_t \cap C| \leq O(g)|\mathcal{C}_t |.$

The following definition of crossing cycles was elementary to the approach by Goemans and Williamson \cite{GoemansW1998}.
\begin{definition} \label{cross}
  Fix an embedding of a planar graph.
%we will call a set of cycles laminar if the regions they bound are laminar. 
  Two cycles $C_1,C_2$ \emph{cross} if $C_i$ contains an edge intersecting the interior of the region bounded by $C_{3-i}$, for $i=1,2$.
  That is, the plane curve corresponding to the embedding of the edge in the plane intersects the interior of the region of the plane bounded by~$C_{3-i}$.
  A set of cycles $\mathcal{C}$ is \emph{laminar} if no two elements of $\mathcal{C}$ cross. % the set $\{ C_1,C_2 \}$ is not laminar.
\end{definition}

Denote $\mathcal{C}'$ the set of facial cycles of $\mathcal{C}$.
For a node $v \in \bar{S}$, call a cycle $C \in \mathcal{C}'$ with $C \cap \bar{S} =\{v \}$ a \emph{witness} for $v$.
Since we applied reverse deletion to $\bar{S}$ at the end of the algorithm, each node of $\bar{S}$ has a witness in $\mathcal{C}'$ which is a cycle of $G^S$.

The following result about the structure of witness cycles was vital to the 3 and 2.4 approximations for DFVS in planar graphs by \cite{GoemansW1998} and \cite{BY2.4}. 
We observe that the proof in \cite{GoemansW1998} which involves iteratively applying an ``uncrossing" procedure to two witness cycles that cross yields the same result for facial cycles of graphs on surfaces.

% \begin{definition}
%   For a facial cycle $C$ call the region of the torus bounded by $C$ which is homeomorphic to the plane the \emph{inside region} bounded by $C$.  
%   We call a set of facial cycles \emph{laminar}, if the interior regions they bound form a laminar set. 
% \end{definition}

\begin{lemma}\label{laminarwitness}\cite{GoemansW1998}
There exists a laminar family $\mathcal{A} \subset \mathcal{C}'$ of witness cycles in $G^{S_{\bar{t}} }$ for $ \bar{S}_t $.
\end{lemma}

The laminar family $\mathcal{A}$ can be represented by  a forest where $A_1$ is an ancestor of $A_2$ if the inside region of 
$A_1$ contains the inside region of $A_2$.  Add a root node $r$ to this forest, make it the parent of every maximal node of the forest and call the resulting tree $T$. 

We assign each cycle $C$ of $\mathcal{C}_t$ to the smallest node of $T$ containing $C$. Call the set of cycles assigned to $w \in T$, $\mathcal{C}_w$. We assign the nodes that $w$ and the children of $w$ are witnesses of to $w$ and call this set $\bar{S_w}$.

To bound $\sum_{C \in \mathcal{C}_t} |\bar{S}_t \cap C|$, we define the following bipartite graph.

%%We define the debit graph $\mathcal{D}_G$ as in \autoref{debit}.
\begin{definition}\cite{GoemansW1998}
\label{debit}
    The \emph{debit graph} for $\mathcal{C}_t$ and $S$ is the bipartite graph $\mathcal{D}_G = (\mathcal{R} \cup S, E)$ with edges $E_{\mathcal{C}_t} = \{ (C,s) \in \mathcal{C}_t \times S \mid s \in  C  \}$.
\end{definition} 
Since each $ C \in \mathcal{C}_t $ is incident to the vertices of $\bar{S}_t$ on $C$, $|\bar{S}_t \cap C|$ is the degree of $C$ in $\mathcal{D}_G$.
Summing this equality over each $ C \in \mathcal{C}_t $ yields $ \sum_{C \in \mathcal{C}_t} |\bar{S} \cap C| = E(\mathcal{D}_G)$.
By placing the node of the debit graph corresponding to $C$ inside the inside region of $C$ we can see that the debit graph is also embedded on $Q$. 
\begin{prop}\label{EulerGenusg} [Corollary of Euler's formula for graphs of genus $g$]
A (simple) bipartite graph $\bar{G}$ with at least three vertices embedded on a surface of genus $g$ satisfies 
\[ E(\bar{G}) \leq (2+g)|V(\bar{G})|-4  \]
if $G$ has two vertices then 
\[ E(\bar{G}) \leq (2+g)|V(\bar{G})|-3  \]
\end{prop}
\begin{proof}
%%The case $g=0$ is just Euler's formula for planar graphs. Henceforth, assume $g \geq 1$.
    Euler's formula (for instance see \cite{KinseyTopology}) for graphs embedded on a surface of genus $g$ yields  
    $2-2g  =|V(\bar{G})| -| E(\bar{G})| + |F(\bar{G})|$.  Following the same method as the proof of Euler's formula for bipartite planar graphs with at least 3 vertices, (for instance see Corollary 4.2.10 of \cite{Diestel}) we observe that each face of $\bar{G}$ having at least 4 edges means $ |F(\bar{G}) |  \leq \frac{1}{2} |  |  $. 
    Thus, for $ |V(\bar{G})| \geq 3$,
$|E(\bar{G})| \leq 2|V(\bar{G})| - 4+4g \leq (2+g)|V(\bar{G})|-4  $.
If $|V(\bar{G})| \leq 2$ then $|E(\bar{G})|  \leq 1 \leq (2+g)|V(\bar{G})|-3  $.

\end{proof}
For a  node $w$ of $T$ that is not a leaf or the root,  the subgraph of $\mathcal{D}_G$ induced by $ \mathcal{C}_w \cup \bar{S}_w $ is embedded on $Q$ and further $|\mathcal{C}_w \cup \bar{S}_w|  \geq 3 $, thus by \autoref{EulerGenusg},
\begin{equation}
    | E(\mathcal{D}_G (\mathcal{C}_w \cup \bar{S}_w )) | \leq   (2+g)|\mathcal{C}_w | + (2+g) | \bar{S}_w| -4  = (2+g)|\mathcal{C}_w | +(2+g) ( \operatorname{deg}_T(w) -1) -4.
\end{equation}
For a leaf $v$ of $T$  

\begin{equation}
     | E(\mathcal{D}_G (\mathcal{C}_v \cup \bar{S}_v )) | \leq  (2+g)|\mathcal{C}_v | + 2 |  \bar{S}_v| -3 = (2+g)|\mathcal{C}_v | +2 ( \operatorname{deg}_T(v) -1) -3.
\end{equation}

For the root $r$ of $T$

\begin{equation}
    |E(\mathcal{D}_G (\mathcal{C}_r \cup \bar{S}_r ))| \leq  (2+g)|\mathcal{C}_r | + 2 |  \bar{S}_r|  = (2+g)|\mathcal{C}_r | +2 ( \operatorname{deg}_T(r) -1) .
\end{equation}

Summing these up we get 

$
\begin{array}{ll}
 |E(\mathcal{D}_G)|     & = \sum_{v \in T}  | E(\mathcal{D}_G (\mathcal{C}_v \cup \bar{S}_v ) )|  \\
     &  \leq  (2+g) | \mathcal{C} | + \sum_{v \in T} (2+g) \operatorname{deg}_T(v)   - 4|T| +l +4 \\
     & \leq  (2+g) | \mathcal{C} | + 2 ((2+g)|T | -2 )  - 4|T| +l +4  \\
     & \leq (2+g) | \mathcal{C} | +2g|T| +l  \\
     & \leq  (3+3g) | \mathcal{C} |  
     
 \end{array}$
 
 where $l$ is the number of (non-root) leaves of $T$. 
 Thus,  $\sum_{C \in \mathcal{C}_t} |\bar{S} \cap C| =   |E(\mathcal{D}_G)| \leq  (3+3g) | \mathcal{C} | $.

 This shows that $\bar{S}$ has cost $O(g)OPT_{LP}$ and hence our algorithm returns a solution of cost $O(g)OPT_{LP}$.

 Now let us show that in the case $G$ is embedded in a way such that there exist regions $R_1,R_2$ of $Q$ homeomorphic to the open disk, such that the inside region of any facial dicycle contains at least one of $R_1,R_2$, then \autoref{MWFVS} is an $8$- approximation.

 The proof works exactly the same as the general case. The  key here is to note that the inside regions of face minimal dicycles do not intersect.
 Thus, $R_1$  lies in the inside region of at most one cycle in $\mathcal{C}_t$. 
 Likewise, $R_2$  lies in the inside region of at most one cycle in $\mathcal{C}_t$.
 Since inside region of any facial dicycle contains at least one of $R_1,R_2$, $ | \mathcal{C}_t | \leq 2  $.
 %%that in this case $\mathcal{C}_t$
 %%The rest of the analysis is the same.
 Again \autoref{laminarwitness} holds. 
For a facial dicycle $A$, denote $\operatorname{inside}_{\mathcal{C}_t} (A) $ the set of cycles of $\mathcal{C}_t$ that lie in the closure of the inside region of $A$.

\begin{lemma}
    There do not exist distinct $A_1,A_2,A_3 \in \mathcal{A}$ such that $\operatorname{inside}_{\mathcal{C}_t} (A_1)=  \operatorname{inside}_{\mathcal{C}_t} (A_2) = \operatorname{inside}_{\mathcal{C}_t} (A_3)$.
\end{lemma}
\begin{proof}
    Suppose such $A_1,A_2,A_3$ existed. Since they are laminar we may assume w.l.o.g that $ A_1 $ is contained in the closure of the inside region of $A_2$ and $A_2$ is contained in the closure of the inside region of $A_3$.
    Let $v_i$ be the hit node that $A_i$ is the witness of.
    Note that $v_2$ does not lie on $A_1$. 
    Thus, as $v_2$ lies outside $\operatorname{inside}(A_1)$, it lies outside the closure $ \operatorname{cl}(\operatorname{inside}(A_1))$ of $\operatorname{inside}(A_1)$.
    So $v_2$ lies in $Q\backslash \operatorname{inside}(A_3) $. 
    Thus, $v_2$ does not lie on any cycle of $\operatorname{inside}_{\mathcal{C}_t} (A_3)$.
Also $v_2$ does not lie on $  A_3 $.
Thus, as $v_2$ lies inside  $ \operatorname{cl}( \operatorname{inside}(A_3) )$, it lies inside $ \operatorname{inside}(A_3) $. Thus, $v_2$ does not lie on any cycle of $  \mathcal{C}_t \backslash \operatorname{inside}_{\mathcal{C}_t} (A_3) $.

This implies that $v_2$ does not lie on any cycle of $\mathcal{C}_t$, which is a contradiction.
        
\end{proof}
 %%Note that for $  A_1,A_2 \in \mathcal{A}$, the
 %%Again we represent the laminar family $\mathcal{A}$
 This implies that $ |\bar{S}_t| = | \mathcal{A} | \leq 2(2^{|\mathcal{C}_t|}) \leq 8$.
 Thus, $\sum_{C \in \mathcal{C}_t} |\bar{S}_t \cap C|  \leq |\bar{S}_t| |\mathcal{C}_t| \leq 8  |\mathcal{C}_t|$.
 This shows \autoref{MWFVS} is an 8-approximation.
\end{proof}

% Given a cycle $C$ embedded on a surface $Q$ of genus greater than 1, denote by 

% Special case where $\mathcal{C}$ is the set of cycles bounding a particular region.

% we are given a (node-weighted) directed graph $G=(V,E)$ with costs $ c_v \ \forall v \in V$ and wish to find a minimum cost set $X$ for which $G \backslash X$ is acyclic.

% In general, given a a (node-weighted) directed graph $G=(V,E)$ with costs $ c_v \ \forall v \in V$ a set $\mathcal{C}$ of cycles of a digraph $G$, we define the $\mathcal{C}$-hitting set problem as the problem of finding a minimum cost set $X$ such that $ X \cap C \neq \emptyset \ \forall C \in \mathcal{C}$.

%%Given a graph $G$ embedded on  a surface $Q$ and a regions $R_1,R_2$ of $Q$ homeomorphic to the open disk, 
% define the $(R_1,R_2)$-cycle hitting problem as the $\mathcal{C}$-hitting set problem where $\mathcal{C}$ is the set of all dicycles $C$ such that at least one of $R_1,R_2$ is contained in the inside region of $Q \backslash C$.

% \begin{lemma}
    
% \end{lemma}
%%This allows us to

\section{Solving the case of no facial cycles}
We now show   the LP gap of the natural LP \eqref{DFVS LP} for $G$ has integrality gap $O(g)$ in the case $G$ contains no facial cycles. 
This will allow us to derive an $O(g)$-approximation for the general case by first using \autoref{facialAlgorithm} to obtain a hitting set $S$ for the set of facial cycles of cost at most $ O(g) OPT$ and then obtaining a hitting set $\bar{S}$ for the remaining dicycles.
%%As mentioned previously facial cycles can be tackled in a similar way to the techniques used in the main result of \cite{GoemansW1998}.

\begin{lemma}\label{DFVSgapGen}
Suppose $G$ is a digraph embedded on a surface $Q$ of some fixed genus $g$ and there is no facial dicycle of $G$. 
%partitions the surface into 2 connected components. 
Then the LP gap of the natural LP \eqref{DFVS LP} for $G$ 
%%%with unit weights 
has integrality gap $O(g)$.
\end{lemma}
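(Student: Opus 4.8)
## Proof Proposal for Lemma \ref{DFVSgapGen}

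The plan is to induct on the genus $g$. In the base case $g=0$ there are no facial dicycles on a sphere only if there are no dicycles at all (every dicycle on the sphere bounds a disk), so the LP gap is trivially $1$. For the inductive step, the strategy is to use the LP solution $x^*$ of \eqref{DFVS LP} as a length function on the vertices (as in \cite{Even1998Logn}) and to exhibit a low-cost separator $S$ — cost $O(x^*(V))$, i.e. $O(g)$ times the LP optimum — whose removal decomposes $G$ into pieces each of which either embeds on a surface of strictly smaller genus, or is a subgraph of $G$ with a controlled set of arcs deleted so that it too drops in genus. Since there are no facial dicycles, I first fix a dicycle $C$ whose removal lowers the genus of $Q$ (such a $C$ exists: a non-facial dicycle is non-contractible, and cutting the surface along a non-contractible simple closed curve reduces the genus). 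Traversing $C$ gives a left side and a right side, and one classifies arcs incident to $C$ as ``left arcs'' and ``right arcs.''

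The core of the argument is the case analysis sketched in the introduction, which I would carry out as the main body of the proof. \textbf{Case 1}: there is no short (length $< \delta$ under $x^*$, for a suitable constant $\delta$) dipath leaving $C$ on the left and re-entering on the right and vice versa. Then a BFS-style ball argument in the $x^*$-metric produces a separator $S$ with $c(S) = O(x^*(V))$ such that in $G \setminus S$ no dicycle uses both a left arc and a right arc; hence $G \setminus S$ is the union of ``$G$ minus all left arcs'' and ``$G$ minus all right arcs,'' each of genus $\le g-1$, to which we apply induction and combine. \textbf{Case 2}: such short paths exist but their endpoints on $C$ are pairwise far apart in the $C$-induced cyclic order (as measured by $x^*$); a similar region-growing argument around the few relevant endpoints again yields a cheap separator reducing to the previous situation. \textbf{Case 3}: short paths $P_1, P_2$ exist with the start of $P_2$ close (in directed $x^*$-distance) to the end of $P_1$; here one shows $P_1$ is ``far'' from $P_2$ in the reverse direction, cuts along a closed curve built from pieces of $C$, $P_1$, $P_2$ plus a cheap separator of cost $O(x^*(V))$, and argues that each resulting strongly connected component embeds on a surface of genus $< g$. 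In every case the total cost is $c(S) + \sum_i (\text{recursive bound on piece } i) \le O(x^*(V)) + O(g-1) \cdot (\text{LP opt on pieces}) = O(g) \cdot \mathrm{OPT}_{LP}$, using that the pieces' LP optima sum to at most the global one and that genus only decreases.

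The main obstacle I expect is making the separator-extraction in Cases 1--3 rigorous: one must argue that cutting the surface along the constructed closed curve(s) genuinely lowers the genus of each piece (this requires care with how the curve interacts with the handles and with whether it is separating or non-separating on $Q$), and simultaneously that the directed reachability structure of $G\setminus S$ respects this decomposition — i.e. that every dicycle of $G\setminus S$ lives inside one piece. The directedness is what makes this delicate: ``$P_1$ far from $P_2$'' must be a statement about directed distances under $x^*$, and one has to be sure the ball-growing step picks a radius (by averaging/pigeonhole over $O(1)$ candidate radii in $[0,\delta]$, so that the boundary shell has $x^*$-measure $O(x^*(V)/1) = O(x^*(V))$) for which the boundary shell is a valid vertex separator with the required acyclicity property. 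A secondary technical point is bookkeeping the constants so the recursion $T(g) \le O(1)\,\mathrm{OPT}_{LP} + T(g-1)$ indeed telescopes to $O(g)$; the paper explicitly disclaims optimizing this constant, so a clean but loose choice of $\delta$ and of the number of candidate radii suffices.
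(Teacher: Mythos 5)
Your high-level plan does track the paper's own approach (which is what the "Our techniques" section sketches), and your base case and the overall recursion $T(g)\le O(1)\cdot\mathrm{OPT}_{LP}+T(g-1)$ are exactly what the paper does. However, you correctly flag "making the separator-extraction rigorous" as the main obstacle, and that is precisely where your proposal stops short; there are at least three concrete steps in the paper that your sketch does not anticipate, and without them the induction does not close.

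\textbf{Preprocessing and the choice of $C$.} The paper does not pick an arbitrary genus-reducing dicycle. It first repeatedly removes vertices with LP value at least $\tfrac{1}{24}$ (iterative-rounding-style, cost $\le 24\,\mathrm{OPT}_{LP}$), re-solves, and then chooses a dicycle $C^1$ on which the LP constraint is \emph{tight}, i.e.\ $\sum_{v\in C^1}\hat{x}_v=1$. This tightness is what powers all the later "closed walk of weight $<N$ contradicts feasibility" arguments that certify the paths $P_1,P_2,P_3$ are disjoint and that the region-growing radii can be taken $O(1)$ in number. Your "suitable constant $\delta$" needs this preprocessing behind it; picking $\delta$ by pigeonhole alone does not give you the tight cycle you need to uncross paths.

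\textbf{The decomposition claim is wrong as stated.} You write that in Case 1, $G\setminus S$ "is the union of `$G$ minus all left arcs' and `$G$ minus all right arcs'." That is not true: $G\setminus S$ is a single graph, and the correct statement is about \emph{strongly connected components}. Moreover, what the paper actually needs — in order to invoke the surgery step — is much stronger than "no dicycle uses both a left arc and a right arc." It must show that in each SCC $K$, the \emph{undirected} graph $\mathrm{un}(K)\setminus C^1$ has no path between the left neighbours and the right neighbours of $C^1$ (Propositions~\ref{undirPath} and~\ref{RightLeftPath}). This requires the symmetric separator $T$ for the reversed incidences (the sets $B,D,\kappa_\pm$ in the paper) in addition to $S$, and is the reason both $d_\omega(V_{\tau_-},V_{\tau_+})$ and $d_\omega(V_{\kappa_-},V_{\kappa_+})$ have to be large in the "far apart" branch. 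Only with undirected separation can you apply Proposition~\ref{UnboundedLowerGenus} to produce a non-facial closed curve in $Q\setminus G$ and then do surgery.

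\textbf{Facial dicycles reappear after surgery.} Your induction hypothesis is the Lemma itself, which presupposes the absence of facial dicycles. But once you cut along a non-separating curve and glue in two cones, a dicycle of $G_i$ that was non-facial on $Q$ can become facial on the lower-genus $Q'$. The paper handles this explicitly: Lemma~\ref{containcone} shows every new facial dicycle of $G_i$ must contain one of the two glued cones in its inside region, which is exactly the hypothesis of the refined $8$-approximation branch of Theorem~\ref{facialAlgorithm}; one hits those facial dicycles with a set $Z_i$ of cost $O(\mathrm{OPT}_{LP(G_i)})$ \emph{before} recursing on $G_i\setminus Z_i$. Without this step the induction hypothesis simply does not apply to the pieces, and the recursion does not close. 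Your proposal is a faithful reconstruction of the paper's roadmap, but these three points are where the actual proof lives.
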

\begin{proof}
We prove the statement by induction on the genus $g$.
%%The case $g=0$ was shown in \cite{BY2.4}.
The case $g=0$ is trivial because all cycles in planar graphs are facial.
Suppose the statement is true for $g=g'$. 
Let $Q$ be a surface of genus $g $,    %=g'+1$.
Let $G$ be a digraph embedded on $Q$.

First, while the optimal solution $\bar{x}$ to \eqref{DFVS LP} has a vertex $v$ with value $\bar{x}_v \geq \frac{1}{24}$ add v to our temporary hitting set $F$.
Formally initialize $F = \emptyset$.  
%%%Obtain set $F \subset V$ as follows initially set  $F= \emptyset$. 
While the optimal solution $\bar{x}$ to \eqref{DFVS LP} for $G^F$ contains a value $\bar{x}_v$ which is $\frac{1}{24}$ or more add $v$ to $F$.

Let $F$ denote the final set obtained.
Let $\hat{x}$ be an optimal extreme point solution for the DFVS LP \eqref{DFVS LP} for $G^F$, so $\hat{x}_v < \frac{1}{24} \ \ \forall v \in V(G^F)$.
Standard results in iterative rounding, see for instance page 14 of \cite{lau_ravi_singh_2011}, show $F$ has cost at most 24 times the optimal value of our LP.

We now seek to define (integral) distances on $G^F$.
By standard LP theory, $\hat{x}$ has rational coordinates. 
%%%Choose $ \epsill = 1/6$.    %%deleteINFInal
Let $N \in \mathbb{Z}_{>0}$ be such that $N\hat{x}$ and $\epsill N$ are integral, call $N \hat{x}_v$, the weight of $v$. 
Define the weighted distance of path  $P=v_0,v_1,..,v_l$, $\omega(P)$ to be $ \omega(P):= \sum_{i=0}^{l-1} N \hat{x}_{v_i}$.
For a subgraph $H$ of $G^F$ define the weighted distance $d_{\{ \omega, H \} } (u,v) $ from $u$ to $v$ the minimum weight of a $u$-$v$ path in $H$. 
Define $ d_{ \omega} := d_{\{ \omega, G \} }$.
For $U,W \subset V(G)$, define $d_{ \omega, H } (U,W) := \min_{  u \in U,  w \in W  } \ d_\omega (u,w) $. 
Define $  d_{ \omega } (U,W) = d_{ \omega, G } (U,W)$.
%%Define $d_\omega (U,W) := \min_{  u \in U,  w \in W  } \ d_\omega (u,w) $. 
 Define the weighted distance of a closed walk $P'=v_0,v_1,..,v_l v_0$, $\omega(P')$ to be $ \omega(P'):= \sum_{i=0}^{l} N \hat{x}_{v_i}$.
 The results in this paper could also be shown by instead defining the weight of each vertex to be $\hat{x}_v$ and instead defining the layers (see later) to be the vertices at distance a multiple of $1/N$ from a given set of vertices. 
 Since $\hat{x}$ is feasible the following result holds.
\begin{prop}
The weighted distance of any (directed) closed walk $P'$ is at least $N$.
\end{prop}

%We will say that there are $ N \hat{x}_v $ copies $ \omega_1, \omega_2,.., \omega_{ N \hat{x}_v} =v $ of vertex $v \in V(G)$. We will say that the edge $\omega_i(v) \omega_j(u)$ has weight $ N \hat{x}_v -i +j $.
%We define the weighted distance from $v$ to $u$ the minimum weight of a $v$-$u$ path $P:= v,v_1,v_2,..,v_l u $.

Since $\hat{x}$ is optimal, there exists a dicycle $ C^{1}:= v_1, v_2,.., v_{l'}$ such that $\sum_{v \in  C^{1}} \hat{x}_v =1$.
The motivation of our definition of weighted distance comes from \cite{Even1998Logn}.
In  \cite{Even1998Logn}, they also scale the LP values of \eqref{DFVS LP} so that the resulting values are integer.  
For any vertex $v$ with $\hat{x}_v=0$, they ``bypass" the vertex, that is, for each out neighbour $u$ of $v$ and in neighbour $w$ of $v$, they add the edge $wu$ to the graph and when they have done this for all neighbours, they delete $v$ from the graph.
For any vertex $v$ with $N\hat{x}_v >1` $ they replace $v$ by a ``chain" of $N\hat{x}_v >1` $   vertices  $v_1 \rightarrow v_2 \rightarrow..\rightarrow v_{N\hat{x}_v}$, that is, for $i=1,2,.., N\hat{x}_v-1$, $v_i v_{i+1}$ is an edge.  $ w v_1 $ and $v_{N\hat{x}_v} u$ are edges for each in neighbour $w$ and out neighbour $u$.
Call this graph $H$.

For any $W \subset V(H)$ they define ``layers"  $L_i = \{ v \in H: \ \ d_H(W,v)=i \} $ the nodes at distance $i$ from $W$.
They show that the cost of all layers $L_0,L_1,...$ is  $\sum_{v \in V(G)} N\hat{x}_v$.
This is very useful for us as we will use this to show that one layer in $L_1,... L_m$ has cost at most  $ \frac{1}{m} \sum_{v \in V} N\hat{x}_v$.
However, the bypassing operation and replacing a node with a chain operation of \cite{Even1998Logn} do not preserve the genus of the graph.
We instead define the notion of weighted distance $d_\omega$.  Denote the $i$-th layer from $W$ as $L_i:= \{  v \in V : i   \geq d_\omega( W, v)  > i - \omega(v)   \}$  the set of nodes for which the distance from $W$ to $v$ is at most $i$, but for which the distance plus the weight of $v$ is more than $i$. 
One can see that $v$ lies in $N \hat{x}_v$  different $L_i$, which is analogous to how $H$ defined in \cite{Even1998Logn} contains   $N \hat{x}_v$  copies of $v$ each lying in different layers as well.
In particular, a node of weight 0 does not lie in any $L_i$, which is analogous to how a vertex of weight 0 is bypassed in \cite{Even1998Logn}.

Consider the embedding of $C^{1}$ on our surface.
Given a subgraph $W$ of $G$, denote by $ g(W) $ the subset of our surface occupied by a vertex or edge of $W$.
We want to define a ``small" neighbourhood around $g(C^{1})$, not containing any vertices outside $C^{1}$, which we divide up into ``left" of $g(C^{1})$ and ``right" of $g(C^{1})$, which we do using the following propositions.
These are slightly informal statements of the exact propositions we require, the precise statements appear in \autoref{TopoAppendix}.

\begin{prop}(Informal statement of \autoref{LRsideFormal} and \autoref{LRsideFormalpt2})\label{LRside}
    Given a closed continuous non-self-intersecting curve $ C'$ embedded on an orientable surface $Q$, we may partition a small open neighbourhood about $C$ into a ``left" $L$ and ``right" $R$.  For any curve $f:[0,1] \rightarrow Q $ disjoint from $C'$ except at
$f(1)$ the partition allows us to say that $f$ "reaches" $C'$ from either the left or right. 
\end{prop}

\begin{prop}(Informal statement of corollary of \autoref{curvetype})\label{LRside2}
Let $C'$ be a non-facial closed curve.
    If a curve $h:[0,1] \rightarrow Q$  ``leaves" $C'$ at a point $h(0) \in C'$ from the left and reaches $C'$ at a point $h(1) \in C'$ from the right, then $ h([0,1])  $ together with a subcurve of $C'$ from $h(0)$ to $h(1)$ is a non-facial closed curve.
\end{prop}

% \begin{prop}\label{LRside}
% There exist $r>0$ and disjoint   subsets $L,R$ "on each side" of $g(C^{1})$   such that the set $ \{ B(v,r): \ \ v \in  g(C^{1}) \} $ (where $B(v,r)$ is the open ball around $v$ of radius $r$ in $Q$  ) is the union of $L$ and $R$. 
%  Further for any (continuous) curve $f:[0,1] \rightarrow  Q$ such that $f(x) \notin g(C^{1})$ for any $x \in [0,1)$, $f(1) \in g(C^{1})$ satisfies that for some $ \beta \in (0,1) $ either $f( (\beta,) )$ either $f( (\beta,1) ) \in L $, that is the curve ``reaches" $C$ from the left" $L$ or  $f( (\beta,1) ) \in R $, that is the curve ``reaches $g(C^{1})$ from the right" $R$.
 
%  Further there are curves $f_L: [0,1] \rightarrow U$, $f_R: [0,1] \rightarrow R$ which are homotopic to $f$. Informally speaking, these are obtained by ``slightly shifting" $f$ ``left" and ``right" respectively.
% \end{prop}

We defer the proofs of \autoref{LRside} and \autoref{LRside2} for now.
We apply \autoref{LRside} to $g(C^{1})$. 
Let $L,R$ be as in \autoref{LRside} so that each $g(e)$ for $e \in E(G) \backslash E(C^{1})$ is disjoint from at least one of $L,R$ and for each $e \in E(G \backslash C^1)$, $g(e)$ is disjoint from both $L,R$.
%%%
\begin{figure}
    \centering
    \includegraphics[scale=0.3]{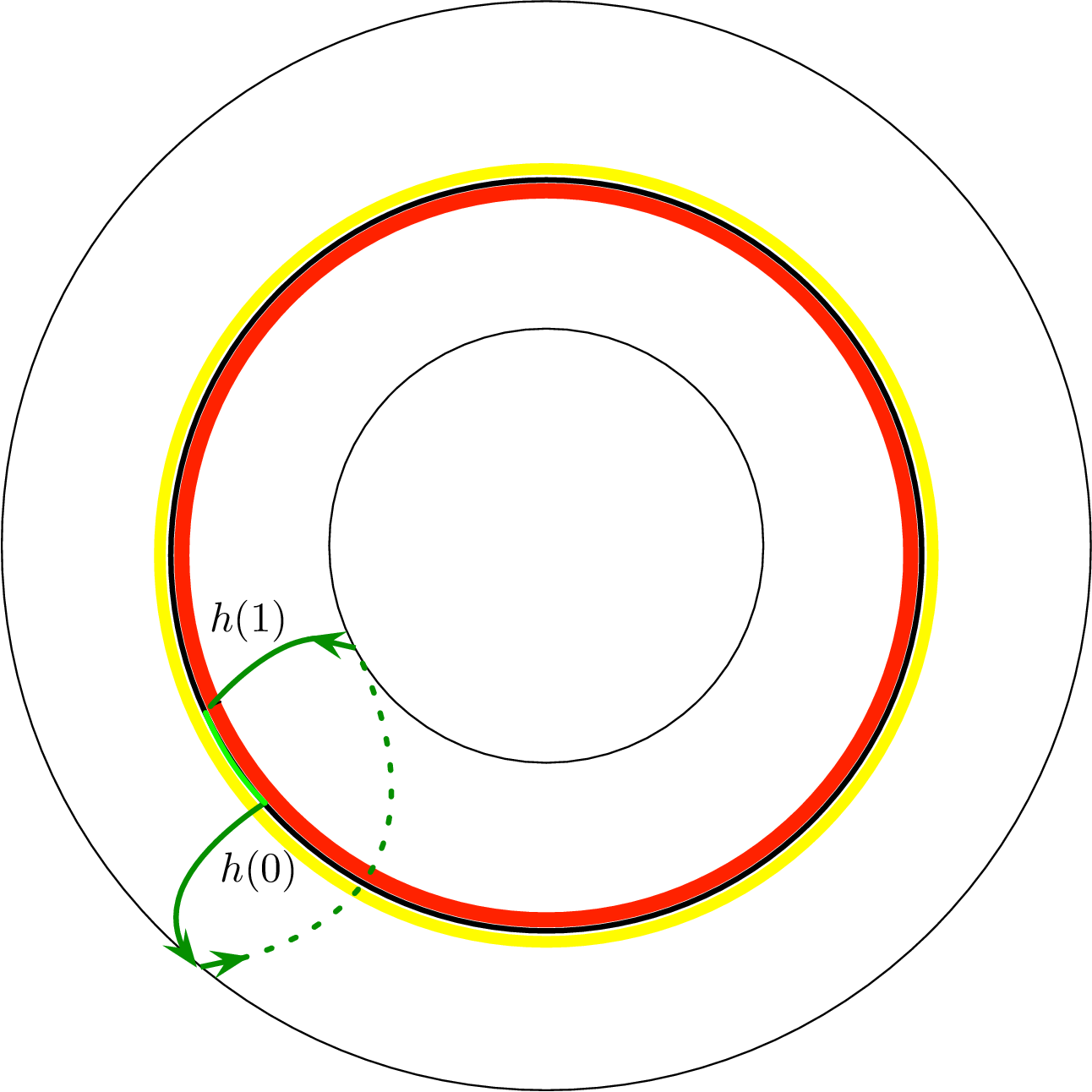}
    \caption{$L$ and $R$ from \autoref{LRside} in yellow and red respectively curve $C'$ depicted in black. The curve $h$ leaving $C'$ from the left and entering from the right is depicted in dark green. The closed curve formed by $h$ and the subcurve of $C'$ between $h(0)$ and $h(1)$ depicted in light green forms a non-facial closed curve. }
    \label{fig:my_label}
\end{figure}
% We claim we can re embed $G$ on the torus so that for some $ 0 < \rho < 1 $: \\ 
% 1) $\bar{v_1},\bar{v_2},.., \bar{v_{l'}}$ are embedded on points $ (a_1,\rho ), (a_2, \rho),.., (a_{l
% '}, \rho)$ with $a_1< a_2,.., a_{l'}$ and no other vertices of $G$ lies on $ [0,1] \times [\rho,1 ]$  \\
% 2) the edges $v_i,v_{i+1}$  lie on the line $ [0,1] \times \{\rho \} $. \\
% 3) For some $ \bar{\epsill} > 0$, all vertices not on $C^{1}$ lie on $ [0,1] \times [\bar{\epsill}, \rho - \bar{\epsill}] $  \\
%%\par %Using this, we prove the lemma. Let $L$ and $R$ be as in \autoref{LRside} 
For each arc $uv$ of $G^F$ with exactly one endpoint $v$  on $C^{1}$, $g(uv)$ can be parameterized by a (continuous) curve $f:[0,1] \rightarrow g(uv)$ with $f(0)=g(u)$, $f(1)=g(v)$.
If $f$ reaches $g(C^{1})$ from the left we say that $uv$ reaches $C^{1}$ from the left, otherwise, we say $uv$ reaches $C^{1}$ from the right.

Let $u'_{i,1}, u'_{i,2},.., u'_{i,l_i}$ be the out neighbours of $ v_i$  such that the edges $ u'_{i,t'} v_i$  reach $v_i$ from the left, that is, the arc obtained from reversing the arc $ v_i u'_{i,t'} $ of our graph reaches $v_i$ from the left.
Let $ w'_{i,1}, w'_{i,2},.., w'_{i,z_i}$ be the in neighbours of $v_i$ such that the edges $w'_{i,t'} v_i$ reach $v_i$ from the right.
Subdivide each edge $v_i u'_{i,t}$ into a path $v_i u_{i,t} u'_{i,t}$ and each edge $w'_{j,t'} v_j$ into a path $w'_{j,t'} w_{j,t'} v_j$ and give the new vertices $w_{j,t'} , u_{i,t}$ infinite cost. There is a natural embedding of our new graph on our surface by placing each $u_{i,t}$ where the midpoint of the curve $g(v_i u'_{i,t})$ was embedded and likewise for $w_{j,t'}$.
By abuse of notation, we continue to call our graph $G$ and define $\hat{x}_{u_{i,t}}= \hat{x}_{w_{j,t'}} =0$ for all $u_{i,t}$, $w_{j,t'}$.
%``cross" a horizontal line $ [0,1] \times \{ \epsill' \} $ for some $ 0< \epsill' < 1$ from     ``from below", 
%that is, the edge $ v_i u_{i,1} $ embedded in $Q$ forms a  continuous curve in $  [0, \epsill' ] \times [0,1]$.
Denote $U:= \cup_{i=1}^{l'} \{  u_{i,1}, u_{i,2},.., u_{i,l_i}  \}$ and  $W:= \cup_{i=1}^{l'} \{  w_{i,1}, w_{i,2},..,  w_{i,z_i}  \}$.
For $X \subset [l']$, denote $U_X:= \cup_{i \in X} \{  u_{i,1}, u_{i,2},.., u_{i,l_i}  \}$, $V_X= \{ v_i :  \ \ i \in X  \}$ and $W_X:= \cup_{i \in X} \{  w_{i,1}, w_{i,2},.., w_{i,z_i} \}$.

%  Let $\tau_-:= \{ i \in [l']:  \exists  w_{i,t'} \in W, \ \ \exists  u_{j,t} \in U :  d_\omega( u_{j,t} , w_{i,t'} )  < \epsill N  \}$ the first indices of the set of vertices of $W$ of weighted distance at most $\epsill N$ from $U$.
% Let $\tau_+:=   \{ j \in [l']: \exists u_{j,t} \in U, \ \  \exists  w_{i,t'} \in W :  d_\omega( u_{j,t} , w_{i,t'} ) < \epsill N  \}  $ the first indices of the set of vertices of $U$ that can reach $W$ with a path of weighted distance at most $\epsill N$.

Let $\tau_-:= \{ i \in [l']:  \exists  w_{i,t'} \in W, \ \ \exists  u_{j,t} \in U :  d_{\omega, G^F \backslash C^{1}}( u_{j,t} , w_{i,t'} )  < \epsill N  \}$
the first indices of the set of vertices of $W$ of weighted distance at most $\epsill N$ from $U$ in $G^F \backslash C^{1}$.
Let $\tau_+:=   \{ j \in [l']: \exists u_{j,t} \in U, \ \  \exists  w_{i,t'} \in W :  d_{\omega, G^F \backslash C^{1}}( u_{j,t} , w_{i,t'} ) < \epsill N  \}  $ the first indices of the set of vertices of $U$ that can reach $W$ in $G^F \backslash C^{1}$ with a path of weighted distance at most $\epsill N$.

\begin{claim}\label{InOutFar1}
If $d_\omega( V_{\tau_-}, V_{\tau_+} ) > \epsill N$, then we can find $S \subset V$, $c(S) = O(1)OPT_{LP}$, where $OPT_{LP}:= \sum_{v \in V} c_v x_v$ is the value of the optimal fractional solution, such that  any strongly connected component of $G^F \backslash S$ does not contain a directed path from $U$ to $W$ in $G \backslash C^{1}$.
%If $\tau_- \cap \tau_+ \neq \empty$, then we can find $S \subset V$ $c(S) = O(1)OPT_{LP}$ such that there exist two  subgraphs $G^1,G^2$ of $G \backslash S$ lying in a region of $Q$ topologically equivalent to the plane such that any dicycle of $G \backslash S$ is contained in $G^1$ or $G^2$.

%%If $d_\omega( V_{\tau_-}, V_{\tau_+} ) \leq \epsill N$, then we can find $S \subset V$ $c(S) = O(1)OPT_{LP}$ such that $G^F \backslash S$ is embeddable in a surface of genus $g-1$.
If $d_\omega( V_{\tau_-}, V_{\tau_+} ) \leq \epsill N$, then the LP gap of the natural LP \eqref{DFVS LP} for $G^F$ has integrality gap $O(g)$.
\end{claim}
\begin{proof}

    Suppose $d_{\omega, G \backslash C^{1}}( V_{\tau_-}, V_{\tau_+} ) > \epsill N$.
For $i=0,.., \epsill N$ let  $S_i:= \{  v \in V : i   \geq d_{\omega, G^F \backslash C^{1}}( U \backslash U_{\tau_+ }, v)  > i - \omega(v)   \}$ denote the set of vertices of $V$ that are at weighted distance $i$ from $ U \backslash U_{\tau_+ } $  in $G^F \backslash C^{1}$.
(see \autoref{LayersSi}).
Since $d_{\omega, G^F \backslash C^{1}}( U \backslash U_{\tau_+ } , W)  > \epsill N   $, for  $i=0,.., \epsill N$,  $ W \cap S_i = \emptyset $ and  $ W $ is not reachable from $U \backslash U_{\tau_+ }$ in $ (G^F \backslash C^{1}) \backslash S_i$ for any $i$.

\begin{figure}
    \centering
    \includegraphics[scale=0.4]{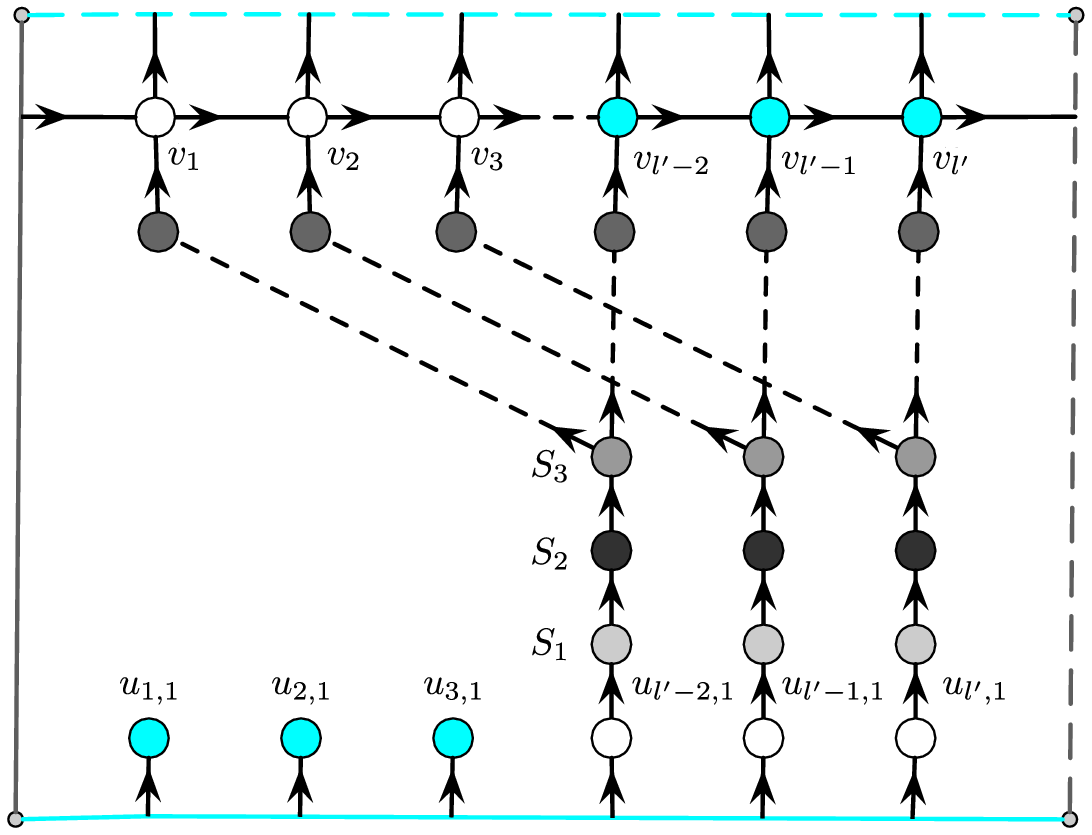}
    \caption{Nodes of $ U_{\tau_+} $ and $V_{\tau_-}$ shown in blue. }
    \label{LayersSi}
\end{figure}

Since each $v$ can lie in at most $\omega(v)$ $S_i$, $\sum_{i=0}^{\epsill N} c(S_i) \leq N \cdot OPT_{LP} $.
Let $S'$ be the  $S_i$ of minimum cost. 
For $i=0,.., \epsill N$ let  $T_i:= \{  v \in V : i  > d_{\omega,  G^F \backslash C^{1}}( v, W \backslash W_{  \tau_-  }  ) -\omega(v), \ \  d_{\omega,  G^F \backslash C^{1}}( v, W \backslash W_{  \tau_-  }  )  \geq i  \}$.   Since for $v \in  U$, $d_{\omega, G^F \backslash C^{1}}( v,  W \backslash W_{  \tau_-  } ) - \omega(v) = d_{\omega}( v,  W \backslash W_{  \tau_-  } ) > \epsill N   $,  $ U \cap T_i = \emptyset $ for $i=0,.., \epsill N$. 
Hence $ W \backslash W_{  \tau_-  } $ is not reachable from $U $ in $ ( G^F \backslash C^{1}) \backslash T_i$ for any $i$. 
Let $T'$ be the $T_i$ of minimum cost.

Finally, let $Y_i:= \{  v \in V:  \ i  \geq d_\omega(  V_{\tau_-} ,v )  > i -\omega(v) \}$ the set of vertices of weighted distance $i$ from $V_{\tau_-}$.
By assumption $d_\omega(V_{\tau_-}, V_{\tau_+}) > \epsill N$ and hence $V_{\tau_+}$ is not reachable from $V_{\tau_-}$ in $G^F \backslash Y_i$ for any $i=1,2,.., \epsill N$.
Let $Y'$ be the $Y_i$ of minimum cost.

Let $S:= S' \cup T' \cup Y'$.   
We claim no  strongly connected component $K'$ of $G^F \backslash S$  contains a directed path from $U$ to $W$ in $G^F \backslash C^{1}$.
Suppose for a contradiction that
%%%some dicycle of $G^F \backslash S$  contained an edge of the form $v_i u_{i,t}$ and an edge of the form $ w_{j,t'} v_j $ for some $i,j$. Then this dicycle contains a path  from $u_{i,t}$ to $w_{j,t'}$.  
 some strongly connected component $K'$ of $G^F \backslash S$  contains a directed path from some $u_{i,t}  \in U$ to some $ w_{j,t'} \in W $. 

If $j \notin \tau_-$, then $w_{j,t'}$ is not reachable from $U$ in $G^F \backslash S$. If $i \notin \tau_+$, then $W$ is not reachable from $u_{i,t}$ in $G^F \backslash S$. Thus, if either $j \notin \tau_-$ or $i \notin \tau_+$ then there is no path from $u_{i,t}$ to $w_{j,t'}$ in $G^F \backslash S$.
 Thus, $j \in \tau_-$ and $i \in \tau_+$. 
 As $K'$ is strongly connected, this implies that $G^F \backslash S$ contains a path from $V_{\tau_-}$ to $V_{\tau_+}$, which is not possible.

\par Now suppose that $d_\omega( V_{\tau_-}, V_{\tau_+} ) \leq \epsill N$.
Let $i \in \tau_-$ and $j \in \tau_+$ be such that $d_\omega(v_i,v_j) \leq \epsill N$.
%Let $i \in \tau_+ \cap \tau_-$. 
Let $ P_1, P_2, P_3$ be  $ u_{a,t} $-$v_i$,  $u_{j,t'} $-$ v_b$ and $v_i$-$ v_j$ paths of weight at most $\epsill N$, with the second last vertices of $P_1,P_2$ being in $W$, for some $a,b$.
Such paths exist as $i \in \tau_-$ and $j \in \tau_+$. 
If $a=i$, then $P_1 v_i u_{a,t}$ is a cycle for which $\sum_{v \in P_1 v_i u_{a,t}  } \hat{x}_v <1 $ which is a contradiction. So $a \neq i$, likewise $b \neq j$.

For $i',j' \in \{1,2,..,l' \}$, let $C^1_{ (v_{i'},v_{j'}) } := v_{i'}, v_{i'+1}, v_{i'+2} ,.., v_{j'-1} v_{j'}$ (where $  v_{t} = v_{t \pmod {l'}} $) denote the directed path in $C^1$ from $v_i$ to $v_j$.
Note that  $d_\omega(v_{i'}, v_{j'}) = \omega(C^1_{ (v_{i'},v_{j'}) })  $, for otherwise there is a $v_{i'}$-$ v_{j'} $ path $P'$ of weight less than $d_\omega(v_{i'}, v_{j'})$.
Then $ C^1_{ (v_{j'} , v_{i'} ) } \cup P' $ is a directed closed walk of weight $ \omega ( C^1) - \omega(C^1_{ (v_{i'},v_{j'}) }) +  d_\omega(v_{i'}, v_{j'}) < \omega(C^{1})$.
Noting that the weighted distance of a cycle is equal to $ N \sum_{v \in C^{1}} \hat{x}_v $, we obtain
$ N \sum_{v \in C^1_{ (v_{j'} , v_{i'} ) } \cup P'} \hat{x}_v  < \omega(C^{1})  =N  $, from which it follows the sum of the $\hat{x}_v$ values along the closed walk $ C^1_{ (v_{j'} , v_{i'} ) } \cup P' $,  $\sum_{v \in C^1_{ (v_{j'} , v_{i'} ) } \cup P'} \hat{x}_v $ is strictly less than 1, which contradicts the feasibility of $\hat{x}$.

We claim $ \omega( C^1_{(v_{a},v_i)} ) , \omega( C^1_{( v_j, v_b) } ) \leq \epsill N$. 
Suppose for a contradiction that $\omega( C^1_{(v_{a},v_i)} )  > \epsill N $.
Since $ \omega(C^1)=N$, this implies that $\omega( C^1_{(v_{i+1}, v_{a-1} )} ) < N - \epsill N $. Then the cycle $ P_1  C^1_{(v_i, v_a) } v_a u_{a,t} $ satisfies $ \sum_{v \in V(P_1  C^1_{(v_i, v_a) } v_a u_{a,t})} \hat{x}_v < 1 - \epsill + \epsill =1 $ which is a contradiction.   Likewise, $ \omega( v_j, v_b ) \leq \epsill N$.

Let us show $C^1_{ (v_{a} , v_{i} ) } \cap C^1_{ (v_{j} , v_{b} ) }  = \{ v_i \} \cap \{ v_j \} $, that is the paths   $C^1_{ (v_{a} , v_{i} ) } $ and $C^1_{ (v_{j} , v_{b} ) } $ are disjoint except in the case $i=j$ when their intersection is $v_i$.
First, let us address the case $i \neq j$. Suppose for a contradiction that $C^1_{ (v_{a} , v_{i} ) } \cap C^1_{ (v_{j} , v_{b} ) }  \neq \emptyset$. Let $ v \in C^1_{ (v_{a} , v_{i} ) } \cap C^1_{ (v_{j} , v_{b} ) } $. 
Note $v \neq v_{i}, v_{j}   $ for otherwise $ C^1_{ (v_{j} , v_{i} )} P_3 $ is a closed walk of weight less than $N$.
Let $Q_1$ be a path from $v$ to $v_i$ in $C^1_{ (v_{a} , v_{i} ) }$ and $Q_2$ a path from $ v_j $ to $v$ in $C^1_{ (v_{j} , v_{b} ) }$. Then $ Q_2 Q_1 P_3$ is a closed walk of weighted distance at most $ \threpsill N $ which is a contradiction.

Now suppose that $i=j$. Suppose for a contradiction that $C^1_{ (v_{a} , v_{i} ) } \cap C^1_{ (v_{j} , v_{b} ) }  \neq \{ v_i \} $. Let $ v \in ( C^1_{ (v_{a} , v_{i} ) } \cap C^1_{ (v_{j} , v_{b} ) } ) \backslash v_i $. Let $Q_1$ be a path from $v$ to $ v_i $ in $C^1_{ (v_{a} , v_{i} ) }$ and $Q_2$ a path from $ v_i$ to $v$ in $C^1_{ ( v_{i}, v_b ) }$. Then $Q_1 Q_2$ is a closed walk of weighted distance at most $ \tepsill N$, which is a contradiction.

%We now claim $d_\omega(P_2 \cup  C^1_{( v_{j+1}, v_b) }  , P_1 \cup  C^1_{(v_{a},v_i)} ) \geq 2 \epsill N  $. 
\begin{claim}
$d_\omega(P_2 \cup  C^1_{( v_{j+1}, v_b) }  , P_1 \cup  C^1_{(v_{a},v_i)} ) \geq  \tepsill N  $. 
\end{claim}
\begin{proof}
Suppose for a contradiction that $ d_\omega(P_2 \cup  C^1_{( v_{j+1}, v_b) }  , P_1 \cup  C^1_{(v_{a},v_i)} )  <  \tepsill N $. Let  $ s \in P_2 \cup  C^1_{( v_{j+1}, v_b) } $ and $q \in P_1 \cup  C^1_{(v_{a},v_i)}$ be such that $d_\omega(s,q) <  \tepsill N$. 
%%%If $ q \in P_1 $, 
Let $P'_1$ be the directed path in $P_1 \cup  C^1_{(v_{a},v_i)}$ from  $q$ to $v_i$.  $P'_2$ the directed path in $P_2 \cup  C^1_{( v_{j}, v_b) }$ from $v_j $ to $s$ and $Q$ a path of weight at most $\tepsill N$ from $s$ to $q$. Then $ \bar{C} := v_j  P'_2 Q P'_1 P_3 $ is a closed walk such that $  \sum_{v \in \bar{C}} \hat{x}_v <1$ which is a contradiction (see \autoref{LayersnShort}). 
\par Thus, $ d_\omega(P_2 \cup  C^1_{( v_i, v_b) }  , P_1 \cup  C^1_{(v_{a},v_i)} ) \geq  \tepsill N $. Since $  d_\omega(u,v_i) \leq \epsill N $ for any $ u \in  C^1_{(v_{a},v_i)}  $, $ d_\omega(P_2, C^1_{(v_{a},v_i)} )  \geq  \epsill N $
\end{proof}

For $i=0,1,.., \epsill N $, define $ R_i := \{ v \in V : i \geq d_\omega(P_2 \cup  C^1_{( v_{j+1}, v_b) },v)  > i -\omega(v) \} $. Each vertex $v \in V$ lies in at most $\omega(v)$ $R_i$.
Let $R'$ be the  $R_i$ of the smallest cost, so $c(R') \leq {\epsillinverse} OPT_{LP}$.
Since $ d_\omega(P_2 \cup  C^1_{( v_{j+1}, v_b) }  , P_1 \cup  C^1_{(v_{a},v_i)} ) \geq \epsill N $, it follows that  $ (P_1 \cup  C^1_{(v_{a},v_i)} ) \backslash R_i$ is not reachable from $ (P_2 \cup  C^1_{( v_{j+1}, v_b) }   ) \backslash R_i $ in $G \backslash R_i$ for any $i$. 
Thus, $(P_1 \cup  C^1_{(v_{a},v_i)} ) \backslash R'$ is not reachable from $  (P_2 \cup  C^1_{( v_{j+1}, v_b) }   ) \backslash R' $ in $G\backslash R'$.
%Thus, any dicycle of $G \backslash R'$ is either contained in $G \backslash P_1 \cup  C^1_{(v_{a},v_i)}$ or $G \backslash P_2 \cup  C^1_{( v_{j+1}, v_b) }$.

Thus, any strongly connected component of $G \backslash R'$ is either contained in $G^F \backslash ( P_1 \cup  C^1_{(v_{a},v_i)} ) $ or $G^F \backslash  ( P_2 \cup  C^1_{( v_{j+1}, v_b) } ) $.
For $i=1,2,.., \epsill N$ let $K^+_i:= \{v \in V: \ \ i \geq d_\omega(v_j,v)  > i - \omega(v) \} $ be the vertices of weighted distance $i$ from $v_j$.
%and  $K^-_i:= \{v \in V: \ \ i \leq d_\omega(v, v_j) + \omega(v) \leq i+1 + \omega \} $ the vertices for which $v_j$ is weighted distance $i$ from $v$.
Let $K'^+$ denote the $K^+_i$  of minimum cost.   

\begin{figure}
    \centering
   % \begin{subfigure}
    \includegraphics[scale=0.4]{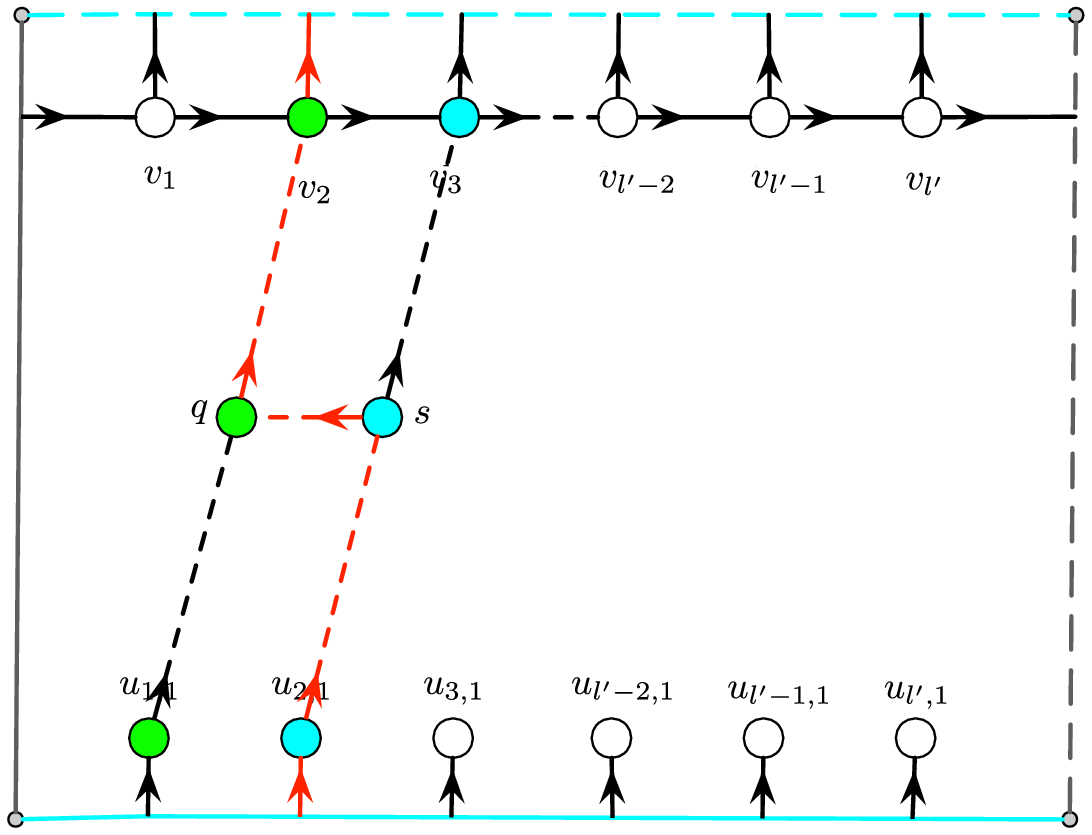}
%    \caption{Here there are $u_{1,1}$- $ \bar{v_2} $ and $ u_{2,1} $ - $ \bar{v_3} $  paths ( green and blue vertices respectively) of weight at most $\epsill N$  and $s$-$q$ path of length at most $\epsill N$. The red cycle would then be have weight at most $N$, which is a contradiction.  }
%    \end{subfigure}
 %   \begin{subfigure}
    \includegraphics[scale=0.4]{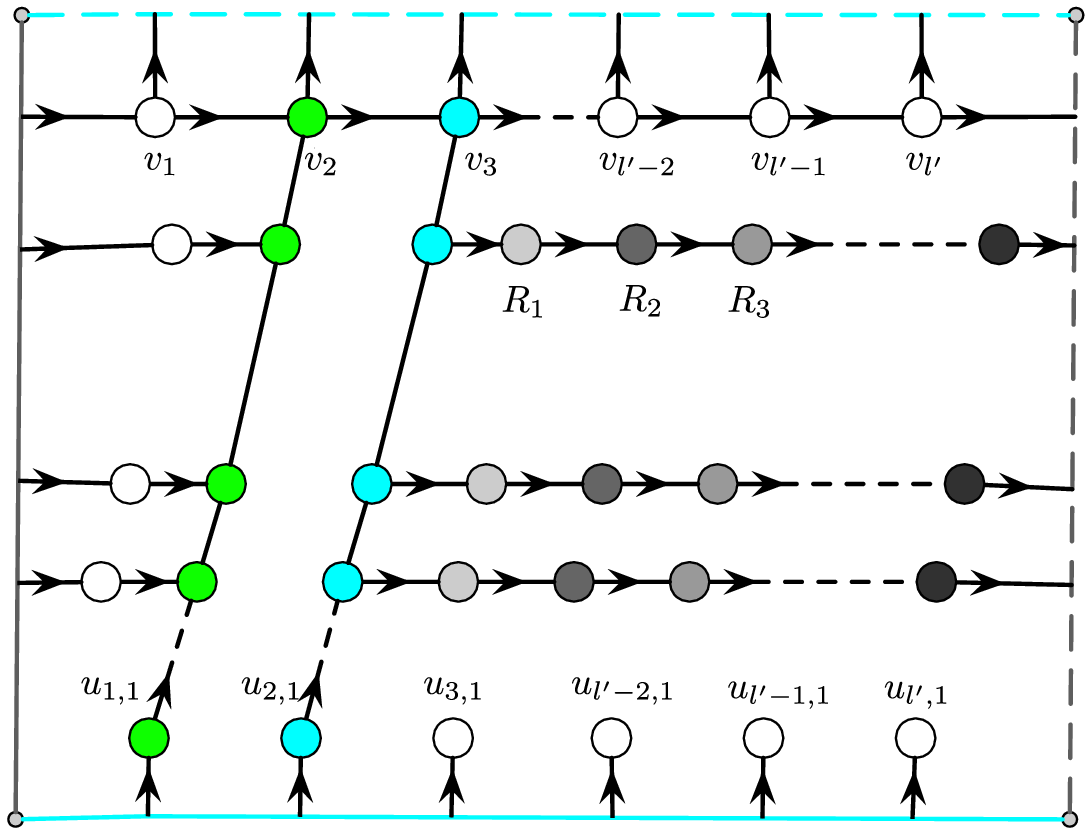} 
   % \caption{}
  %  \end{subfigure}
    \caption{On the left, there are $u_{1,1}$-$ {v_2} $ and $ u_{2,1} $-$ {v_3} $  paths (green and blue vertices respectively) of weight at most $\epsill N$  and $s$-$q$ path of length at most $\epsill N$. The red cycle would then have weight at most $N$, which is a contradiction.  On the right are the sets $R_i$, vertices at distance $i$ from $P_2 \cup  C^1_{( v_{j+1}, v_b)}$. }
    \label{LayersnShort}
\end{figure}
\begin{claim}
$v_j$ is not contained in a cycle in $G^F \backslash (R' \cup K'^+)$.
\end{claim}
\begin{proof}
Suppose that there is a cycle $ a_1,a_2,..,a_p v_j a_1 $ in $G^F \backslash (R' \cup K'^+)$. 
If $d_\omega(v_j,a_p) > \epsill N$, then $a_p$ is not reachable from $v_j$ in  $G^F \backslash (R' \cup K'^+)$.
Thus, there is a path $P_a$ of weighted distance at most $\epsill N$ from $v_j$ to $a_p$. Then the closed walk $ v_j p_a a_p v_j $ has weighted distance at most $\omega(P_a) + \omega(a_p) \leq \epsill N + \epsill N < N $ which is a contradiction.
\end{proof}

Recall that any dicycle of $G^F \backslash R' $ is contained in either $G^F \backslash ( P_1 \cup  C^1_{(v_{a},v_i)} ) $ or  $G^F \backslash ( P_2 \cup  C^1_{( v_{j+1}, v_b) } ) $. 
Since $v_j$ is not contained in any dicycle of $G^F \backslash (R' \cup K'^+)$ it follows that any dicycle of  $G^F \backslash (R' \cup K'^+)$ is either contained in $G \backslash  ( P_1 \cup  C^1_{(v_{a},v_i)}  )$ or in $G \backslash ( P_2 \cup  C^1_{( v_{j}, v_b) } ) $.
By \autoref{LRside2},  $g( P_1 \cup  C^1_{(v_{a},v_i)} )$ and $g( P_2 \cup  C^1_{( v_{j}, v_b) }  )$ are nonfacial.

%%We use the following topological result, the proof of which we defer.

\begin{definition}\label{surgery}
\cite{ArmstBasicTopo,HatcherTopo}  
Given a simple closed curve $f$ on a surface without boundary $Q$, not dividing  the surface into 2 regions,  we say $Q'$ is obtained by doing \emph{surgery} along $f$ if $Q'$ is obtained as follows.
``Thicken" $f$ to obtain a cylinder and remove this cylinder from $Q$, call this resulting surface $Q''$. The boundary of $Q''$ consists of 2 circles we ``glue" two cones $ N_1,N_2$ along these circles and call this final surface $Q'$.
\end{definition}
\begin{theorem}\label{surgreduc} (\cite{ArmstBasicTopo} p.162)
For a surface without boundary $Q$ of genus $g'$, $Q'$ obtained by \autoref{surgery} is a surface without boundary of genus at most $g'-1$. 
\end{theorem}

We apply the surgery of \autoref{surgery} to $g(P_1 \cup  C^1_{(v_{a},v_i)})$ to obtain a surface $Q'$ of genus one less than $Q$.
Let $N'_1,N'_2$ denote the two cones glued to $Q'$.
We also apply the surgery of \autoref{surgery} to $g( P_2 \cup  C^1_{( v_{j}, v_b) }  )$ to obtain a surface $\hat{Q}$ of genus one less than $Q$.
Let $\hat{N}_1, \hat{N}_2$ denote the two cones glued to $\hat{Q}$.
\begin{lemma}\label{containcone}
    Let $G$ be a graph embedded on a surface $Q$ with no dicycles. Let $h$ be a non-facial curve of $Q \backslash G$. Let $Q'$ be the surface obtained by applying the surgery of \autoref{surgery} to with respect to the curve $h$ and surface $Q$. 
    There is a natural embedding of $G$ on $Q'$ (by leaving each node of $G$ where it was in $Q$).
    Let $N_1,N_2$ denote the two cones glued to $Q'$ during the surgery process.
    Then each facial cycle of $G$ with respect to its embedding in $Q'$ contains either $N_1$ or $N_2$ in its inside region.
\end{lemma}
\begin{proof}
    Let $C$ be a facial cycle of $G^F \backslash ( P_1 \cup  C^1_{(v_{a},v_i)} ) $ with respect to its embedding in $Q'$.
 If neither of the cones $ N_1,N_2 $ are contained in the inside region of $ C  $, then $C$ is a facial cycle of $G$ with respect to its embedding in $Q$, which is a contradiction.
% Thus, any facial cycle of $G^F \backslash ( P_1 \cup  C^1_{(v_{a},v_i)} ) $ contains either $ N'_1 $ or $N'_2 $ in its inside region.
\end{proof}
Thus, any facial cycle of $G^F \backslash ( P_1 \cup  C^1_{(v_{a},v_i)} ) $ contains either $ N'_1 $ or $N'_2 $ in its inside region.

Now let $G_1,G_2,.., G_l$ be the strongly connected components of $G^F \backslash (R' \cup K'^+)$. 
Since any closed walk of of  $G^F \backslash (R' \cup K'^+)$ is either contained in $G \backslash  ( P_1 \cup  C^1_{(v_{a},v_i)}  )$ or in $G \backslash ( P_2 \cup  C^1_{( v_{j}, v_b) } ) $ each strongly connected component is   either contained in $G \backslash  ( P_1 \cup  C^1_{(v_{a},v_i)}  )$ or in $G \backslash ( P_2 \cup  C^1_{( v_{j}, v_b) } ) $.
If $G_i$ is contained in $G \backslash  ( P_1 \cup  C^1_{(v_{a},v_i)}  )$, then there is a natural embedding of $G_i$ in $Q'$ (obtained by leaving all nodes and edges where they are in the surgery for \autoref{surgery}).
Likewise, if  $G_i$ is contained in $G \backslash  ( P_1 \cup  C^1_{(v_{a},v_i)}  )$, then there is a natural embedding of $G_i$ in $\hat{Q}$.
% Let $C$ be a facial cycle of $G^F \backslash ( P_1 \cup  C^1_{(v_{a},v_i)} ) $ with respect to its embedding in $Q'$.
% If neither of the cones $ N'_1,N'_2 $ are contained in the inside region of $ C  $, then $C$ is a facial cycle of $G$ with respect to its embedding in $Q$, which is a contradiction.
% Thus, any facial cycle of $G^F \backslash ( P_1 \cup  C^1_{(v_{a},v_i)} ) $ contains either $ N'_1 $ or $N'_2 $ in its inside region.
Thus, for any $G_i$  contained in $G \backslash  ( P_1 \cup  C^1_{(v_{a},v_i)}  )$ by \autoref{facialAlgorithm}, there is an 8-approximation for the problem of hitting the facial cycles of $G_i$ (with respect to the natural embedding in $Q'$).
Likewise, for any $G_i$ contained in  $G \backslash ( P_2 \cup  C^1_{( v_{j}, v_b) } ) $ there is an 8-approximation for the problem of hitting the facial cycles of $G_i$.
Let $ Z_i $ be a solution for the problem of hitting facial cycles of $G_i$ of cost at most $8OPT_{LP(G_i)}$ as guaranteed by \autoref{facialAlgorithm}.

Then each $G_i \backslash Z_i$ is embedded in a surface of smaller genus with no facial cycles.

% Since both  $G \backslash  ( P_1 \cup  C^1_{(v_{a},v_i)}  )$ and $G \backslash ( P_2 \cup  C^1_{( v_{j}, v_b) } ) $ are of smaller genus, each $G_i$ is of smaller genus. 
%%Note that 
By induction, there are solutions $ A_i$ to $G_i 
\backslash Z_i $ of cost $c_{g-1} OPT_{LP(G_i 
\backslash Z_i)} $, where $c_g$ is the integrality gap of the DFVS LP for graphs of genus $g$.

Define $\hat{x}^{G_i 
\backslash Z_i } \in \mathbb{R}^{V(G_i 
\backslash Z_i)}$ as $\hat{x}^{G_i} (v) =\hat{x}_v $, where $\hat{x}$ is as in the proof of \autoref{DFVSgapGen}.
%%Define $\hat{x}^G_i \in \mathbb{R}^{V(G_i)}$ as $\hat{x}^G_i (v) =\hat{x}_v $, where $\hat{x}$ is as in the proof of \autoref{DFVSgapGen}.
Since graphs $G_i$ are vertex disjoint, $G_i 
\backslash Z_i $ are vertex disjoint, so $\sum_{i=1}^l OPT_{LP(G_i)} \leq \sum_{i=1}^l \sum_{v \in V(G_i)} \hat{x}^G_i (v) \leq \sum_{v \in V(G)} \hat{x} =  OPT_{LP(G)}$.
Now $F \cup R' \cup K'^+ \cup ( \cup_{i=1}^l A_i )  \cup ( \cup_{i=1}^l Z_i ) $ is a DFVS of cost $ ( O(1) +c_{g-1} )OPT_{LP(G)}  =  ( O(1) + O(g-1) )OPT_{LP(G)} =  ( O(g) )OPT_{LP(G)} $.

\begin{comment} 
By induction, the LP gap of \eqref{DFVS LP} for $G^F \backslash  ( P_1 \cup  C^1_{(v_{a},v_i)} ) $, $G^F \backslash  ( P_2 \cup  C^1_{( v_{j}, v_b) } ) $ is some constant $c_g$.
%%By \cite{BY2.4} the LP gap of \eqref{DFVS LP} for $G \backslash P_1 \cup  C^1_{(v_{a},v_i)}$, $G \backslash P_2 \cup  C^1_{( v_{j}, v_b) }$ is at most $2.4 $. 
Thus, there are vertex sets $A_1, A_2$ of cost at most $c_g OPT_{LP(G^F \backslash P_1 \cup  C^1_{(v_{a},v_i)}  )}$, $ c_g OPT_{LP (G^F \backslash ( P_2 \cup  C^1_{( v_{j}, v_b) } ) } ) $ respectively such that $G^F \backslash ( P_2 \cup  C^1_{( v_{j}, v_b) } \cup A_1 )$, $ G^F \backslash (P_2 \cup  C^1_{( v_{j}, v_b) \cup A_2) }$ are acyclic.
Then $F \cup A_1 \cup A_2 \cup R' \cup K'^+$ is a DFVS of cost $ O(1) OPT_{LP} $, for fixed genus $g$.
%lies in a region of $Q$ topologically equivalent to the plane.
%Construct a curve $g_1$ 
% If  $G \backslash P_1 \cup  C^1_{(v_{a},v_i)}$  contained a dicycle $C_1$, then such a cycle bounds a region homeomorphic to an open disk and hence this region is not the entire torus. Thus, $C_1$ partitions the torus into 2 connected components contrary to our assumptions. Thus, $G \backslash P_1 \cup  C^1_{(v_{a},v_i)}$  contains no dicycle.  Likewise, $G \backslash P_2 \cup  C^1_{( v_j, v_b) }$ contains no dicycle.
\end{comment}
\end{proof}

Note that the argument in \autoref{InOutFar1} is symmetric with respect to left and right and we may swap right and left to get the following result.
Let $b'_{i,1}, b'_{i,2},.., b'_{i,l'_i}$ be the in neighbours of $ v_i$  such that each edge $b'_{i,t} v_i$ reaches $v_i$ from the left and $d'_{i,1}, d'_{i,2},.., d'_{i,t'_i}$ be the out neighbours of $ v_i$ such that the edge $d'_{i,t'} v_i$ reaches $v_i$ from the right.
Subdivide each edge $b'_{j,t'} v_j$ into a path $b'_{j,t'} b_{j,t'} v_j$  and each edge $v_i d'_{i,t}$ into a path $v_i d_{i,t} d'_{i,t}$ and give the new vertices $d_{j,t'} , b_{i,t}$ infinite cost. 
There is a natural embedding of our new graph on our surface by placing each $b_{i,t}$ where the midpoint of the curve $g(v_i b'_{j,t'})$ was embedded and likewise for $d_{j,t'}$.
By abuse of notation, we continue to call our graph $G$ and define $\hat{x}_{b_{i,t}}= \hat{x}_{d_{j,t'}} =0$ for all $b_{i,t}$, $d_{j,t'}$.

%``cross" a horizontal line $ [0,1] \times \{ \epsill' \} $ for some $ 0< \epsill' < 1$ from     ``from below", 
%that is, the edge $ w_{i,1} v_i  $ embedded in $Q$ forms a  continuous curve in $  [0, \epsill' ] \times [0,1]$.  

Denote $  B:= \cup_{i=1}^{l'} \{  b_{i,1}, b_{i,2},.., b_{i,l'_i}  \}$, $D= \cup_{i=1}^{l'} \{  d'_{i,1}, d'_{i,2},.., d'_{i,t'_i}  \}$.
Let $\kappa_-: \{ i \in [l']:  \exists  b_{i,t'} \in D, \ \ \exists  d_{j,t} \in U :  d_{\omega, G^F \backslash C^{1}}( d_{j,t} ,  b_{i,t'} )  < \epsill N  \}$ the first indices of the set of vertices of $B$ of weighted distance at most $\epsill N$ from $D$ in $G^F \backslash C^{1}$.
Let $\kappa_+:=   \{ j \in [l']: \exists d_{j,t} \in U, \ \  \exists  b_{i,t'} \in W :  d_{\omega, G^F \backslash C^{1}}( d_{j,t} ,  b_{i,t'} ) < \epsill N  \}  $ the first indices of the set of vertices of $D$ that can reach $B$ with a path of weighted distance at most $\epsill N$ in $G^F \backslash C^{1}$.
Similarly to how we proved \autoref{InOutFar1}, we can show the following:
\begin{claim}\label{InOutFar}
If $d_\omega( V_{\kappa_-}, V_{\kappa_+} ) > \epsill N$, then we can find $T \subset V$, $c(T) = O(1)OPT_{LP}$, (recall $OPT_{LP}:= \sum_{v \in V} c_v x_v$ is the value of the optimal fractional solution), such that any strongly connected component of $G^F \backslash T$ does not contain a directed path from $D$ to $B$ in $G \backslash C^{1}$.

If $d_\omega( V_{\kappa_-}, V_{\kappa_+} ) \leq \epsill N$, then the LP gap of the natural LP \eqref{DFVS LP} for $G$ 
has integrality gap $O(1)$.
\end{claim}

We now construct a DFVS of cost at most $O(g) OPT_{LP}$. 
If either $d_\omega( V_{\kappa_-}, V_{\kappa_+} ) \leq \epsill N$ or $d_\omega( V_{\tau_-}, V_{\tau_+} ) \leq \epsill N$. Then \autoref{InOutFar1} or \autoref{InOutFar} respectively shows that that  the LP gap of the natural LP \eqref{DFVS LP} for $G$ has integrality gap $O(g)$.

Now assume both $d_\omega( V_{\tau_-}, V_{\tau_+} )  , d_\omega( V_{\kappa_-}, V_{\kappa_+} ) > \epsill N$.
Then by \autoref{InOutFar1} and \autoref{InOutFar}, there are sets $S,T$ such that any strongly connected component of $G^F \backslash (S \cup T)$ does not contain a path from $U$ to $W$ or a path from $D$ to $B$ in $G^F \backslash C^{1}$.

For any digraph $H $ define  $ \operatorname{un}(H) $ to be the underlying (undirected) graph of $H$.
Let $K$ be any strongly connected component of $G^F \backslash (S \cup T)$.  
We will prove $\operatorname{un}(K)$ does not contain any path from  $ U \cup B $ to $W \cup D$ in $ \operatorname{un}(K) \backslash C^{1} $.

\begin{prop}\label{undirPath}
If there is a (undirected) path $P=u_{i,t} q_1,q_2,.., q_t $ from some $ u_{i,t} \in U $ (resp $ u_{i,t} \in D $) in  $ \operatorname{un}(K) \backslash C^{1} $, then there is a directed path from $U$ (resp $D$) to  $q_j$ in $G^F \backslash (S \cup T \cup C^{1})  $ for any $j =1,2,.., t$.

If there is a (undirected) path $P=q_1,q_2,.., q_t w_{i,t}  $ from some $ w_{i,t} \in W $ (resp $ b_{i,t} \in B $)  in $ \operatorname{un}(K) \backslash C^{1} $, then there is a directed path from  $q_j$  to $W$  (resp $B$)  in $G^F \backslash (S \cup T \cup C^{1})  $ for any $j =1,2,.., t$.
\end{prop}
\begin{proof}
    Let  $P=u_{i,r} q_1,q_2,.., q_t $  be a path in   $ \operatorname{un}(K) \backslash C^{1} $  from some $u_{i,r} \in U$ (resp $ u_{i,r} \in D $). 
    We prove by induction $t'$ on that  there is a directed path from $U$ to $q_j$ in $G^F \backslash (S \cup T \cup C^{1})  $ for any $j =1,2,.., t'$.
    The case $ t'=1 $ is clear as each $u_{i,r} \in U$  (resp $ u_{i,r} \in D $)  only has out neighbours so the undirected edge $ \{  u_{i,r}, q_1 \} $ in  $\operatorname{un}(K)$ is directed from  $u_{i,r}$ to $q_1$.

    Now assume the statement true for $t'=t''$. For $t'=t''+1$, if the undirected edge  $ \{ q_{t'} , q_{t'+1}  \}$ is directed from  $  q_{t'}$ to $ q_{t'+1} $, then there is a directed path from $u_{i,r}$ to $q_{t'+1}$ in $G^F \backslash (S \cup T \cup C^{1})  $.

    Otherwise   $ \{ q_{t'} , q_{t'+1}  \}$ is directed from  $  q_{t'+1}$ to $ q_{t'} $.   By strong connectedness of $K$, there is a directed path $P'$ from $ q_{t'} $ to $q_{t'+1}$ in $ K \backslash (S \cup T)  $. 
%%    $P'$ must intersect $C^{1}$.   
    If $P'$ does not intersect $C$ then there is a directed path from $u_{i,r}$ to $q_{t'+1}$ in $G^F \backslash (S \cup T \cup C^{1})  $.
    So, assume $P'$  intersects $W$ or $B$.
    Let $P''$ denote the subpath of $P'$ from $q_{t'}$ to when $P'$ first intersects $U$ or $ B$. 
    By construction $P''$ lies in $G^F \backslash (S \cup T \cup C^{1})  $.
    As  $u_{i,r}$  lies in $U$ (resp $D$) $P''$ does not intersect $W$ (resp. $B$), as then we would have a $U$-$W$ (resp. $D$-$B$) path in $G^F \backslash (S \cup T \cup C^{1})  $.
    Thus, $P''$  is a $q_{t'}$-$B$ (resp. $q_{t'}$-$W$) path.
    Consider the subpath $Q$ of the reversal of $P'$ starting from $  q_{t'+1}$ to when  the reversal of $P'$ first intersects $D$ or $U$.
    Let $ \operatorname{rev}(Q)$ denote the reversal of $Q$. 
    Note $ \operatorname{rev}(Q)$ lies in  $G^F \backslash (S \cup T \cup C^{1})  $.
    If the starting vertex of $ \operatorname{rev}(Q)$ is in $ D$ (resp. $U$), then $ \operatorname{rev}(Q) \cup \{ q_{t'+1}  q_{t'}  \} \cup P'' $ is a $D$-$B$  (resp.  $U$-$W$) path in $G^F \backslash (S \cup T \cup C^{1})  $.
    This contradicts \autoref{InOutFar}.
    Thus, the starting vertex of of $ \operatorname{rev}(Q)$ is in $ U$ (resp. $D$).
    This implies there is a path from $U$ (resp. $D$) to   $ q_{t'+1} $ completing the induction.
    % Then $ \operatorname{rev}(Q) $ is a path from  either $W$ or $B$ to  $u_{i,r} \in U$  (resp $ u_{i,t} \in D $)  in  $G^F \backslash (S \cup T \cup C^{1})  $. 
    % As there is no  $  $
    % Since there is a path from $U$ (resp. $B$ ) to   $ q_{t'} $, $P'$ does not inter
    The proof of the second part is similar.    
\end{proof}

\begin{prop}\label{RightLeftPath}
    There is no (undirected) path from $ W \cup D $ to $U \cup B$ in $\operatorname{un}(K) \backslash C^{1}$.
\end{prop}
\begin{proof}
If we have a $U$-$W$ path $P=u_{i,t} q_1,q_2,.., q_t w_{j,t'} $ in $ \operatorname{un}(K) \backslash C^{1}$, then by \autoref{undirPath}, there are directed  $U$-$ q_1 $ and  $q_1$-$W$ paths $ P_1$ and  $ P_2 $  in $\operatorname{un}(K)  \backslash C^{1}$.  Then $P_1 \cup P_2$ is a directed $U$-$W$ path in $K \backslash C^{1}$ which contradicts \autoref{InOutFar}.
Thus, we do not have a $U$-$W$ path $P=u_{i,t} q_1,q_2,.., q_t w_{j,t'} $ in $K \backslash C^{1}$.  
Likewise, we do not have a $D$-$B$ path $P=u_{i,t} q_1,q_2,.., q_t w_{j,t'} $ in $K \backslash C^{1}$.  

Suppose we have a $U$-$D$ path $P=u_{i,t} q_1,q_2,.., q_t d_{j,t'} $ in $\operatorname{un}(K)  \backslash C^{1}$.
By \autoref{undirPath}, there are directed  $U$-$ q_1 $ and  $D$-$q_1$ paths $ P_1$ and  $ P_2 $  in $K \backslash C^{1}$.
Recall $U$ has no in-neighbours of in $G \backslash C^{1}$, so the edge $\{ u_{i,t}, q_1\}$ in $K$ is directed from $ u_{i,t}$ to $ q_1$.
By 2 connectedness of $K$, there is a path $P_3$ from $ q_1 $ to $u_{i,t}$. 
The only in-neighbours of $u_{i,t}$ are in $C^{1}$, thus $P_3$ intersects  $ W \cup B $.
Let $P'_3$ be the subpath of $P_3$ from $q_1$ to when it the path first intersects $ W \cup B $.
If the endpoint of $P'_3$ is in $W$, then $P_1 \cup P'_3$ is a $U$-$W$ path in $K \backslash C^{1}$.
Otherwise, if the endpoint of $P'_3$ is in $B$, then $P_2 \cup P'_3$ is a $D$-$B$ path in $K \backslash C^{1}$.
Either way this contradicts \autoref{InOutFar}.
    
\end{proof}

\begin{prop}\label{UnboundedLowerGenus} \cite{Mathoverflow}\cite{YoungsTheoremPaper}%%\cite{YoungsTheoremBook}
    Suppose $G$ is a graph embedded on a surface $Q$. Let $C$ be a cycle of $G$ that does not divide $Q$ into two separate regions such that there is no edge between vertices of $C$ that is not part of $C$.  
    Define a ``left" and ``right" as in \autoref{LRside}. 
    Let $\hat{L}$, $\hat{R}$ denote the neighbours of $C$ that are ``left" or ``right" of $C$. 
    Suppose each connected component of  $G \backslash C$ only contains nodes of $\hat{L}$ or $\hat{R}$ but not both.
    %%Then $G$ is embeddable on a surface of smaller genus.
    There is a non-facial closed curve $h$ in   $  Q \backslash G $.    
    
\end{prop}

%%Applying \autoref{RightLeftPath}, we get that each strongly connected component $K$ of $G^F \backslash C^{1}$ satisfies \autoref{UnboundedLowerGenus}. Thus, $K$ is embeddable on a surface of $g-1$. 
Applying \autoref{RightLeftPath}, we get that $G^F $  satisfies \autoref{UnboundedLowerGenus} with respect to $ C^{1}$. 
Thus, there is a non-facial closed curve $h$ in   $  Q \backslash G^F $.
We apply the surgery of \autoref{surgery} with respect to the closed curve $h$ and surface $Q$ to obtain a surface $Q'$ of lower genus.
Let $G_1,G_2,., G_l$ be the strongly connected components of $G^F \backslash  (S \cup T)$, so each $G_i$ is embeddable on $Q'$.
By \autoref{containcone} each facial dicycle of $G_i  $ contains one of the cones of $Q'$.
Hence there is an algorithm that returns a hitting set $Z_i$ to the set of facial cycles of $G_i$ of cost at most  $8OPT_{LP(G_i)}$.
%%By induction  the LP gap for any strongly connected component $K$  of $G^F \backslash C^{1}$  is at most   some constant $c_{g-1}$.
%%By induction, we can find a
By induction, there are solutions $ A_i$ to $G_i \backslash Z_i $ of cost $c_{g-1} OPT_{LP(G_i)} $, where $c_g$ is the integrality gap of the DFVS LP for graphs of genus $g$.
Define $\hat{x}^G_i \in \mathbb{R}^{V(G_i)}$ as $\hat{x}^G_i (v) =\hat{x}_v $, where $\hat{x}$ is as in the proof of \autoref{DFVSgapGen}.
Since graphs $G_i$ are vertex disjoint, $\sum_{i=1}^l OPT_{LP(G_i)} \leq \sum_{i=1}^l \sum_{v \in V(G_i)} \hat{x}^G_i (v) \leq \sum_{v \in V(G)} \hat{x} =  OPT_{LP(G)}$.
Then $ S \cup T \cup F \cup ( \cup_{i=1}^l A_i ) \cup ( \cup_{i=1}^l Z_i ) $ is a DFVS of cost $ ( O(1) +c_{g-1} )OPT_{LP(G)} =( O(1) +O(g-1) )OPT_{LP(G)} = ( O(g) )OPT_{LP(G)}$.

\end{proof}

As observed in \cite{Even1998Logn}, \eqref{DFVS LP} can be solved in polynomial-time via the ellipsoid method. 
Hence \autoref{DFVSgapGen} yields a polynomial time $O(g)$-approximation algorithm for DFVS in  graphs of genus $g$ with no facial cycle. 
%%Constants in our analysis have been chosen to make the arguments simplistic and are not optimal.

\section{Statement and proofs of topological results we use}\label{TopoAppendix}
First let us prove  \autoref{UnboundedLowerGenus}.
\begin{proof}
% We use the following theorem.
% \begin{theorem}(\cite{YoungsTheoremPaper}  pg 308) 
% In any minimal genus embedding of a graph, each face is homeomorphic to an open disk.    
% \end{theorem}

%%Suppose, for the sake of contradiction that, the given embedding of $H$ into $Q$ is a minimal genus embedding of $G$.
Suppose each connected component of  $G \backslash C$ only contains nodes of $\hat{L}$ or $\hat{R}$ but not both.
Let $G_L$ and $G_R$ be the unions of the  components  of $G - C$ that only contain nodes from$\hat{L}$  and $\hat{R}$  respectively. Assume that $Q \backslash G$ contains no non-facial curve $h$.

Case 1: At least one of $G_L$ or $G_R$ is empty.

 Suppose, without loss of generality, that $G_L$ is empty.
Consider the face of $G$ that contains $C$ and intersects the left of $C$.
%%By Youngs's theorem, this face must be homeomorphic to a disc.
But, $C$ is not contractible (else it would separate the surface $Q$ into two components).
Hence, a small leftward shift of $C$  which will lie in the face $f$  will produce a non-facial curve $h$. 
%%This contradicts that $f$ is homeomorphic to a disc, since a disc is simply connected.

Case 2: Both $G_L$ and $G_R$ are nonempty.

%% We first show that if a face contains vertices of $G_L$ and of $C$ (resp., of $G_R$ and of $C$), then it cannot contain vertices of $G_R$ (resp, of $G_L$).
We claim that if a face contains vertices of $G_L, G_R$ and of $C$ %%(resp., of $G_R$ and of $C$), 
then there is a non-facial curve in $Q \backslash G$.
%%Suppose, for the sake of contradiction, that there is such a face $f$ of degree $d$.
Let $f$ be such a face of degree $d$.
Let $\partial f = v_0 \dotsm v_{d-1}$ be the boundary cycle of $f$, where $i \in \mathbb{Z}/d \mathbb{Z}$.
Without loss of generality,  assume that $v_{0} \in \hat{L}$ and for some $q$ $v_1,v_2,.., v_q \in C$, and $v_{q+1} \in \hat{R}$.
There is are points $p_l$ on the edge $v_0 v_1$ in the interior of $L$ and $p_R $ on the edge $v_q v_{q+r}$ in the interior of $R$.  
Let $h:[0,1] \rightarrow Q$ be a non-self-intersecting curve in $f$ from $ p_R $ to $p_L$.
Let $r_L, r_R >0$ be such that $B_Q(p_l,r_L)  \subset L, \ \  B_Q(p_R,r_R) \subset R $. 
Let $h^L, h^R$ be non-self-intersecting curves in $B_Q(p_l,r_L)$ and $B_Q(p_R,r_R) $ from $ p_l$ to $v_1$ and $ p_R $ to $v_q 
  $ respectively not intersecting $h$.
Then $h \cup h^L \cup h^R $ satisfies the conditions of \autoref{LRside2}.
Thus, $h \cup h^L \cup h^R \cup g(  p_l v_0,v_1,..., v_{q+1}  p_R ) $ does not bound a region of the closure of $f$.
As this curve lies in the closure of $f$, this implies that $f$ is not homeomorphic to an open disk.
By the classification theorem for orientable surfaces (see for instance page 87 of \cite{KinseyTopology}
), $ \operatorname{cl}(f)$ is homeomorphic to a $ m$-torus $T_m$ with a finite number of open disks removed.
Since $f$  is not homeomorphic to an open disk, $f$ contains a non-facial closed curve $h$ in its interior.

If there is a face $f$ of $G$ whose boundary contains vertices of $G_L$ and of $G_R$ (but not of $C$), then as there is no edge between $G_L$ and $G_R$, the boundary of $f$ is not connected and so $f$  is not homeomorphic to an open disk.
Just as before this implies  $f$ contains a non-facial closed curve $h$.

Now, consider the subsets $Q_L$ and $Q_R$ of $Q$ obtained by taking the union of all the vertices, edges and faces induced by $G_L \cup C$ and $G_R \cup C$, respectively.
By assumption 3, every component of $G - C$ is in $G_L$ or $G_R$, so every vertex and edge of $G$ belongs to $Q_L$ or $Q_R$.
By the subcases eliminated above under Case 2, every face of $G$ also belongs to $Q_L$ or $Q_R$ (but not both).
Then, $Q = Q_L \cup Q_R = (Q_L - C) \sqcup (Q_R - C) \sqcup C$.
This means that $C$ separates $Q$ into two components, which contradicts assumption 1.
\end{proof}

It is well known (see for instance \cite{ArmstBasicTopo} page 15) that smooth surfaces $Q$ have the property that for each $v \in Q$  there is an open ball $B_Q(v,r_0)$ of some small radius $r_0 >0$ in $Q$ and a diffeomorphism $\psi$ from  $B_Q(v,r_0)$  to the open disk $B_{\mathbb{R}^2}(0,r_0)$ of radius $r_0$ about the origin in the two-dimensional plane such that $\psi(v)=(0,0)$ and $\psi$ preserves distances from $v$, that is $  \operatorname{dist}_Q(v,x)= \| \psi(v) - \psi(x) \|$, where $\operatorname{dist}_Q(v,x)$ is the geodesic distance from $v$ to $x$ in $Q$.
For $p \in Q$, $r>0$, denote by $B(p,r)$ the open ball of radius $r$ about $p$.
We now formally state and prove what \autoref{LRside} and \autoref{LRside2} informally say. 

\begin{prop}\label{LRsideFormal}
Given a closed continuous non-self-intersecting curve $ C'$ embedded on an orientable surface $Q$.
There exist some radius $r>0$ and disjoint   subsets $L,R$ ``on each side" of $C'$   such that the set $ \{ B(v,r):  \ v \in  C' \} $ (where $B(v,r)$ is the open ball around $v$ of radius $r$ in $Q$) is contained in the union  $L \cup R \cup C'$, and for each $v \in C'$, $r' \leq r$, $L\cap B(v,r') $ and $R\cap B(v,r') $ are the two connected components of $ B(v,r') \backslash C' $. 
%%In addition $L$ and  $R$ are connected and $B(v,r) \cap L$ and $B(v,r) \cap R$ are connected for all $v \in C'$.
There is  a diffeomorphism $\phi$ from $  L \cup C' \cup R $ to a connected open neighbourhood of $C' \times \{0 \} $ in $ C' \times  \mathbb{R}$ and small $q>0$ with $  C' \times ( -q ,0) \subset \phi(L) \subset C' \times ( -\infty ,0) $, $C' \times (0,q) \subset \phi(R) \subset C' \times (0,\infty)$ and $\phi(C')= C' \times \{0 \}$.
 Further for any (piecewise smooth) curve $f:[0,1] \rightarrow  Q$ such that $f(x) \notin C'$ for any $x \in [0,1)$, $f(1) \in C'$ satisfies that for some $ \beta \in (0,1) $,  either $f( (\beta,1) ) \in L $, that is the curve ``reaches $C$ from the left" $L$ or  $f( (\beta,1) ) \in R $, that is the curve ``reaches $C'$ from the right" $R$.
 \end{prop}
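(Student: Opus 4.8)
The plan is to realise a genuine tubular neighbourhood of $C'$ and then read off every clause of the proposition from it; the one place where orientability of $Q$ enters is in forcing this neighbourhood to be an annulus rather than a Möbius band. Since $C'$ is a (piecewise-smoothly) embedded circle, it has an open neighbourhood $N$ that deformation-retracts onto $C'$ and is therefore a line bundle over $S^1$ — one may take the interior of a closed regular neighbourhood in a triangulation of $Q$, or, after a harmless smoothing of the finitely many corners of $C'$, a smooth tubular neighbourhood. A real line bundle over $S^1$ is either trivial or the open Möbius band, and in the latter case $Q$ would contain a Möbius band and hence be non-orientable; so $N$ is an annulus, and after shrinking we obtain a diffeomorphism $\phi_0 : N \to C'\times(-\epsilon,\epsilon)$ with $\phi_0(C') = C'\times\{0\}$. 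Composing with the inclusion $C'\times(-\epsilon,\epsilon) \hookrightarrow C'\times\mathbb{R}$ gives the desired $\phi$. Setting $L := \phi^{-1}(C'\times(-\epsilon,0))$ and $R := \phi^{-1}(C'\times(0,\epsilon))$ produces two disjoint open sets, each connected (being homeomorphic to $C'\times(0,\epsilon)$), whose union with $C'$ is $N$, and with $C'$ contained in the closure of each; the sandwiching $C'\times(-q,0)\subseteq\phi(L)\subseteq C'\times(-\infty,0)$, and its mirror image, holds with $q=\epsilon$.

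For the uniform radius: $N$ is open and contains the compact set $C'$, so there is $r_0>0$ with $B(v,r_0)\subseteq N$ for every $v\in C'$, giving $\bigcup_{v\in C'}B(v,r)\subseteq N = L\cup C'\cup R$ for all $r\le r_0$. I then shrink $r\le r_0$ so that, for every $v\in C'$ and every $r'\le r$, the set $C'\cap B(v,r')$ is a single simple arc through $v$ with both ends on $\partial B(v,r')$; this is possible by compactness and injectivity of $C'$. Transporting through the distance-preserving chart $\psi_v:B(v,r')\to B_{\mathbb{R}^2}(0,r')$ recalled just above the proposition, $\psi_v(C'\cap B(v,r'))$ becomes a simple arc in the closed planar disk meeting the boundary circle exactly in its two endpoints, so by the Jordan curve theorem (Schoenflies) $B(v,r')\setminus C'$ has exactly two connected components $\Omega_1,\Omega_2$. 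Each $\Omega_i$ is connected and contained in $N\setminus C' = L\sqcup R$ with $L,R$ open and disjoint, hence $\Omega_i\subseteq L$ or $\Omega_i\subseteq R$; since $v$ lies in the closure of both $L$ and $R$, the ball $B(v,r')$ meets both, so after relabelling $\Omega_1\subseteq L$ and $\Omega_2\subseteq R$, and then $L\cap B(v,r') = \Omega_1$ and $R\cap B(v,r') = \Omega_2$ because $L,R$ are disjoint and miss $C'$.

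Finally, given $f$ as in the statement, continuity provides $\beta\in(0,1)$ with $f([\beta,1])\subseteq B(f(1),r)$, and then $f((\beta,1))$ is a connected subset of $B(f(1),r)\setminus C' = (L\cap B(f(1),r))\sqcup(R\cap B(f(1),r))$, so it lies entirely in $L$ or entirely in $R$ — the required dichotomy. The genuinely delicate step is the first paragraph: producing the annular tube and invoking orientability to exclude the Möbius band, and in particular handling the non-smooth corners of $C'$ cleanly; everything afterwards is standard point-set topology (two-sidedness of a simple arc in a disk, plus connectedness arguments). The companion \autoref{LRsideFormalpt2} and its corollary \autoref{LRside2} will then follow by recording which of $L,R$ a curve occupies near each of its endpoints on $C'$.
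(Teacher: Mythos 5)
Your argument and the paper's proof are essentially the same: both obtain a tubular neighbourhood of $C'$ diffeomorphic to $C'\times(-\epsilon,\epsilon)$ (you via the classification of line bundles over $S^1$ plus orientability ruling out the M\"obius band; the paper via a Darboux-frame construction of a global unit normal field, which amounts to the same triviality of the normal bundle), define $L,R$ as the preimages of the two half-collars, extract a uniform radius by compactness, settle the two-components claim with a Jordan/Schoenflies argument in a distance-preserving chart, and derive the ``reaches from one side'' property from connectedness of $f((\beta,1))$. The only place you are lighter than the paper is the assertion that $C'\cap B(v,r')$ is a single arc with endpoints on $\partial B(v,r')$ for all small $r'$, uniformly in $v$ --- the paper expends a short derivative/monotonicity lemma on this, whereas you cite compactness and injectivity; this is a legitimate step to flesh out but not a gap in the approach.
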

 
 \begin{prop}\label{LRsideFormalpt2}      
 For a finite set of curves $f_1, f_2, f_3,.. f_{l'}, h_1,h_2,.., h_{t'} : [0,1] \rightarrow Q$ such that for each $i$, $f_i(x) \notin C'$ for any $x \in [0,1)$ and $h_i(x) \notin C'$ for any $x \in [0,1]$ we may choose $L,R,r$ above so that each curve $f_i([0,1))$ is disjoint from at least one of $L,R$ and each curve $h_i([0,1])$ is disjoint from both $L,R$. We refer to $L$ and $R$ as the left and right of $C'$ respectively.

 Further, there are curves $f_L: [0,1] \rightarrow L$, $f_R: [0,1] \rightarrow R$ which are homotopic to $C'$. Informally speaking, these are obtained by ``slightly shifting" $f$ ``left" and ``right" respectively.
 %%In addition if $C"$ is a subcurve of $C$
\end{prop}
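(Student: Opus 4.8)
The plan is to establish \autoref{LRsideFormal} and \autoref{LRsideFormalpt2} together, the engine being a thin annular neighbourhood of $C'$. Since $Q$ is orientable, $C'$ is two‑sided: an open regular neighbourhood of $C'$ is an open subset of an orientable surface, hence orientable, hence not a M\"obius band. So by the collar / regular‑neighbourhood theorem for simple closed curves in surfaces \cite{ArmstBasicTopo,HatcherTopo} — smoothing $C'$ at its finitely many corners beforehand if one insists on a genuine diffeomorphism — there is an open set $N \supseteq C'$ and a homeomorphism $\psi_0 \colon N \to C' \times (-1,1)$ with $\psi_0(C') = C' \times \{0\}$. I would write $L_0 = \psi_0^{-1}(C' \times (-1,0))$, $R_0 = \psi_0^{-1}(C' \times (0,1))$, and for $0 < t \le 1$ put $N_t = \psi_0^{-1}(C' \times (-t,t))$ with sides $L_t \subseteq L_0$, $R_t \subseteq R_0$. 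The closures $\overline{N_t}$ are nested compacta with $\bigcap_{t>0}\overline{N_t} = C'$, so $\sup_{x \in \overline{N_t}} \operatorname{dist}_Q(x,C') \to 0$ as $t \to 0$; this lets me make the tube uniformly as metrically thin as I want.

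Fixing such an $N$, compactness of $C'$ produces an $r > 0$ with $\bigcup_{v \in C'} B(v,r) \subseteq N$ such that for every $v \in C'$ and $0 < r' \le r$ the ball $B(v,r')$ meets $C'$ in a single sub‑arc through $v$ and $B(v,r') \setminus C'$ has exactly two components, one in $L_0$ and one in $R_0$. This is the routine metric bookkeeping done inside the distance‑preserving disk charts recalled just before \autoref{LRsideFormal}: a sufficiently small metric ball about a point of a tame arc is cleanly bisected by it, and at a corner two smooth sub‑arcs emanate and still cut a small disk into exactly two regions. Taking $L := L_0$, $R := R_0$, $\phi := \psi_0$ gives the first part and the product part of \autoref{LRsideFormal}. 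For the ``reaches from the left/right'' clause, if $f\colon[0,1]\to Q$ avoids $C'$ on $[0,1)$ with $f(1)\in C'$, I would pick $\delta>0$ with $f([1-\delta,1])\subseteq N$; then $f([1-\delta,1))$ is connected and avoids $C'$, so it lies entirely in one of the two components $L,R$ of $N\setminus C'$, which is the claim with $\beta=1-\delta$.

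For the refinement \autoref{LRsideFormalpt2}, I would apply the clause just proved to each $f_i$ with the neighbourhood $N$, obtaining $\beta_i\in(0,1)$ and a side $\sigma_i\in\{L_0,R_0\}$ with $f_i((\beta_i,1))\subseteq\sigma_i$. Since $f_i([0,\beta_i])$ is compact and disjoint from the closed set $C'$, $\rho_i:=\operatorname{dist}_Q(f_i([0,\beta_i]),C')>0$; similarly each $h_j([0,1])$ is compact and disjoint from $C'$, so $\delta_j:=\operatorname{dist}_Q(h_j([0,1]),C')>0$. Here finiteness of the index sets is exactly what is needed: $\mu:=\min(\min_i\rho_i,\min_j\delta_j)>0$. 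I would then pick $t>0$ with $\overline{N_t}\subseteq\{x:\operatorname{dist}_Q(x,C')<\mu\}$, re‑run the previous paragraph with $N_t$ in place of $N$ to get a radius $r>0$, and set $L:=L_t$, $R:=R_t$; these still satisfy all of \autoref{LRsideFormal}. Now each $h_j([0,1])$ lies outside $N_t\supseteq L\cup R$, so it meets neither $L$ nor $R$. For $f_i$: the piece $f_i([0,\beta_i])$ lies outside $N_t$, while the piece $f_i((\beta_i,1))$ lies in $\sigma_i$, say $L_0$, which is disjoint from $R_t$. Hence $f_i([0,1))=f_i([0,\beta_i])\cup f_i((\beta_i,1))$ misses $R_t$ (symmetrically it misses $L_t$ when $\sigma_i=R_0$), i.e. $f_i([0,1))$ is disjoint from at least one of $L,R$, which is the assertion of \autoref{LRsideFormalpt2}. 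Finally, for $f_L,f_R$ I would use $\psi_0\colon N_t\to C'\times(-t,t)$ and set $f_L:=\psi_0^{-1}(\,\cdot\,,-t/2)$, $f_R:=\psi_0^{-1}(\,\cdot\,,t/2)$, composed with the given parametrisation of $C'$; the homotopy $(v,\lambda)\mapsto\psi_0^{-1}(v,-\lambda t/2)$ inside $C'\times(-t,t)$ deforms $f_L$ onto $C'$ and likewise $f_R$, and these are the leftward/rightward shifts.

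The only genuinely delicate step I foresee is the clean‑bisection bookkeeping of the second paragraph — that sufficiently small metric balls centred on $C'$ are split by $C'$ into exactly two pieces, one on each side — together with securing the collar theorem when $C'$ is only piecewise smooth. The cleanest way around both is to replace $C'$ at the outset by a smooth simple closed curve isotopic to it through curves avoiding the relevant vertices (which changes none of the left/right or facial/non‑facial data used elsewhere in the paper), or to quote the PL regular‑neighbourhood theorem; after that, everything reduces to the compactness arguments above.
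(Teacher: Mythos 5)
Your proof is correct and takes essentially the same route as the paper: both build $L$ and $R$ from a tubular/collar neighbourhood of $C'$ (the paper via \autoref{tubneighcor}, you via the regular-neighbourhood theorem), use compactness of $C'$ and of the curve pieces $f_i([0,\beta_i])$ and $h_j([0,1])$ to extract a uniform positive gap, and then shrink the tube to clear those pieces. The only substantive difference is cosmetic — you shrink a product-coordinate parameter $t$ where the paper shrinks a ball radius $r$ and intersects with $L', R'$ — and you defer the ``small balls are cleanly bisected by $C'$'' verification (which the paper carries out in full via a short lemma inside distance-preserving charts), but you correctly flag it and your proposed fix (smooth $C'$ beforehand or invoke PL regular neighbourhoods) is adequate.
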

\begin{prop}\label{LRsideFormalpt3}  
Lastly let $h:[0,1] \rightarrow Q$ be any curve that reaches $C'$ from the right at a point $c_2=h(1)$ on $C'$, leaves $C'$ from the left at   $c_1=h(0)$, that is the curve $ \bar{\psi}(t)=h(1-t) $ reaches $C'$ at $c_1$ from the left and $h$ is otherwise disjoint from $C'$. Assume $c_1 \neq c_2$ and let $C'_{c_1,c_2}$ be a subcurve of $C'$ with endpoints $c_1$ and $c_2$.
 Then there is a curve $\hat{h} :[0,2] \rightarrow Q $ that reaches $C'$ from the right at $c_1=\hat{h}(1)$ and leaves $C'$ at a point $ c_1=\hat{h}(0) $ $\hat{h}$ is otherwise disjoint from $C'$, and a there is a homeomorphism of $Q$ that maps $\hat{h}$  to the concatenation of $h$ and $C'_{c_1,c_2}$.
\end{prop}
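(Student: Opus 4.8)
\emph{Proof plan.}
The plan is to realise the loop $\mu:=h\cdot C'_{c_1,c_2}$ --- follow $h$ from $c_1$ to $c_2$, then follow the sub-arc $C'_{c_1,c_2}$ of $C'$ back from $c_2$ to $c_1$ --- as the image, under a homeomorphism of $Q$, of a curve $\hat h$ that touches $C'$ \emph{only} at $c_1$. Since $\mu$ runs along $C'$ all the way from $c_1$ to $c_2$, the curve $\hat h$ will be produced by an ambient isotopy that pushes the overlapping arc $C'_{c_1,c_2}$ --- and, crucially, the incoming end of $h$ at $c_2$ --- slightly off $C'$ into the right side $R$. The time-$1$ map of this isotopy is a homeomorphism of $Q$ carrying $\hat h$ onto $\mu$, and its inverse is the homeomorphism asked for.

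\emph{Choosing a collar.}
Write $h=h^{-}\cdot h^{\circ}\cdot h^{+}$ with $h^{-}=h|_{[0,\delta_0]}$, $h^{\circ}=h|_{[\delta_0,1-\delta_0]}$, $h^{+}=h|_{[1-\delta_0,1]}$ for a small $\delta_0$. As $h$ meets $C'$ only at $h(0)=c_1$ and $h(1)=c_2$, the middle piece $h^{\circ}$ is disjoint from $C'$, while the reversal $\overline{h^{-}}$ and the piece $h^{+}$ each meet $C'$ only at their terminal point ($c_1$, resp.\ $c_2$). Applying \autoref{LRsideFormal} and \autoref{LRsideFormalpt2} to these three curves, I obtain a left side $L$ and right side $R$ of $C'$, a radius $r$, and the product chart of \autoref{LRsideFormal} identifying $L\cup C'\cup R$ with a neighbourhood of $C'\times\{0\}$ in $C'\times\mathbb{R}$, with $C'\mapsto C'\times\{0\}$ and $R\mapsto C'\times(0,\infty)$, such that $h^{\circ}$ is disjoint from $L\cup C'\cup R$, $h^{-}$ meets $C'$ only at $c_1$ and otherwise avoids $R$, and $h^{+}$ meets $C'$ only at $c_2$, lies in $R$ near $c_2$, and otherwise avoids $L$ (the last two using that $h$ leaves $c_1$ from the left and reaches $c_2$ from the right). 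Because $c_1\neq c_2$, the arc $C'_{c_1,c_2}$ is a proper sub-arc of the simple closed curve $C'$, so for a sufficiently thin tubular neighbourhood $N$ of $C'_{c_1,c_2}$ we have $N\subseteq L\cup C'\cup R$, $N$ is an embedded closed disk, and $N$ is disjoint from $h^{\circ}$.

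\emph{The isotopy and the curve $\hat h$.}
Working inside $N$ and its product coordinates, and using the standard fact that an arc in a disk can be pushed off itself, I would construct an ambient isotopy $(\Theta_t)_{t\in[0,1]}$ of $Q$ that is the identity off $N$ and for which $\Theta_1$ (i) lifts the interior of $C'_{c_1,c_2}$ to height $\varepsilon$ inside $R$, for a small $\varepsilon>0$; (ii) near $c_2$ drags $c_2$ and the terminal segment of $h^{+}$ off $C'$ while keeping them inside $R$; and (iii) near $c_1$ fixes $c_1$ but ``rotates'' the local picture so that the direction along $C'_{c_1,c_2}$ is carried into $R$ while the direction of $h^{-}$ is carried back into $L$. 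Put $\hat h:=\Theta_1\circ\mu$, reparametrised to $[0,2]$ with the $h$-part on $[0,1]$ and the $C'_{c_1,c_2}$-part on $[1,2]$, so that $\hat h(0)=\hat h(2)=c_1$. Since $\Theta_1$ is a homeomorphism of $Q$ equal to the identity off $N$, one checks $\hat h\cap C'=\{c_1\}$ by inspecting three places: near $c_1$, $\hat h$ leaves along $h^{-}\subseteq L$ and returns along the pushed copy of $C'_{c_1,c_2}$ from inside $R$, touching $C'$ there only at $c_1$; near $c_2$, $\hat h$ lies in $R$ and misses $C'$; over the interior of $C'_{c_1,c_2}$, $\hat h$ sits at height $\varepsilon$ in $R$; and outside $N$, $\hat h$ coincides with part of $h$ and hence misses $C'$. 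Thus $\hat h$ leaves $C'$ at $c_1$ from the left, reaches $C'$ at $c_1$ from the right, is otherwise disjoint from $C'$, and $\Theta_1^{-1}$ maps $\hat h$ onto $h\cdot C'_{c_1,c_2}$.

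\emph{Where the work is.}
The delicate part is items (ii) and (iii): inside the disk $N$ --- really inside two coordinate half-disks at $c_1$ and $c_2$ together with a product strip over $C'_{c_1,c_2}$ --- one must build a single isotopy that both detaches the end of $h$ at $c_2$ cleanly into $R$ and installs exactly the ``in from $L$, out into $R$'' behaviour at $c_1$, and then verify $\hat h\cap C'=\{c_1\}$; the freedom, guaranteed by \autoref{LRsideFormalpt2}, to shrink $N$, $\varepsilon$ and $\delta_0$ is what makes the bookkeeping succeed. If one prefers to avoid writing an explicit isotopy: truncate $h$ at the last point $c_2'$ on it at height $\varepsilon$, note that the arc ($h$ from $c_2'$ to $c_2$)$\cdot C'_{c_1,c_2}$ and the arc (push-off of $C'_{c_1,c_2}$)$\cdot$(radial segment down to $c_1$) are two arcs in $N$ with common endpoints $c_2',c_1$ and otherwise disjoint, hence bound an embedded bigon in $N$; the homeomorphism of $N$ exchanging the two sides of this bigon, extended by the identity outside $N$, is the desired homeomorphism of $Q$, using only the Schoenflies theorem inside the disk $N$.
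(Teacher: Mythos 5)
Your plan is, at its core, the same as the paper's: realise the concatenation $h \cdot C'_{c_1,c_2}$ as the image of a curve $\hat h$ touching $C'$ only at $c_1$, by means of an ambient homeomorphism of $Q$, supported in a tubular neighbourhood of $C'$, that pushes $C'_{c_1,c_2}$ (and the $c_2$-end of $h$) into the right side $R$ while leaving the left approach of $h$ at $c_1$ on the left. The inverse of that homeomorphism is what the proposition asks for.

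Where the two differ is in how that homeomorphism is actually produced, and this is precisely the part you flag as ``where the work is'' and defer. The paper does the work explicitly and rather cleanly: it picks a continuous shift profile $\gamma : C' \to (-y,0]$ with $\gamma(c_1)=0$ (the printed constraint $\gamma(c_2)=0$ is a typo; the explicit example given, half the distance from $(a,0)$ to the left-side part of $\phi(h)$, vanishes at $c_1$, not $c_2$) and with $\gamma$ strictly to the right of every left-side height of $h$, and then builds a piecewise-linear-per-fibre shear $\hat\gamma$ of the whole tube that carries height $0$ to height $-\gamma(a)/4 > 0$. The single condition on $\gamma$ automatically protects \emph{all} left-side points of $h$ at once, so the paper never needs to split $h$ into $h^-$, $h^\circ$, $h^+$, and never needs $N$ to miss the middle piece of $h$: the constraint on $\gamma$ handles any excursions of $h$ near $C'$. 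Your bookkeeping instead shrinks the tube (via \autoref{LRsideFormalpt2}) to miss $h^\circ$, and then leaves the isotopy near $c_1$ (your item (iii)) and near $c_2$ (item (ii)) as acknowledged future work, or alternatively a Schoenflies/bigon swap whose two arcs, as you describe them, do not quite share the endpoint $c_2'$ (the push-off of $C'_{c_1,c_2}$ at height $\varepsilon$ starts at $(c_2,\varepsilon)$, not at the last point of $h$ at height $\varepsilon$, so a short connecting segment is still needed). None of this is a wrong turn --- the plan would close if made precise, and the paper's $\gamma$-shear is one concrete way to do it --- but as written the proposal stops short of the explicit construction that constitutes the actual proof.
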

\begin{prop}\label{curvetype}
Let $Q$ be an orientable surface and $\phi:[0,1] \rightarrow Q$ a closed curve not dividing $Q$ into 2 regions with disjoint subsets $L,R$ ``on each side" of $\phi$ as in \autoref{LRsideFormal}. Let $c_1,c_2 \in [0,1)$, with $c_2 \geq c_1$. % and let $  p_1 = \phi (c_1) $, $  p_2 = \phi (c_2) $. 
Suppose that $\phi_1: [0,1] \rightarrow Q$ is a curve with $\phi_1(0)= \phi(c_1) \ \ \phi_1(1) =  \phi(c_2) $, $\phi_1((0,1))$ is disjoint from $\phi([0,1])$ and the curve $ \phi_1([0, 0.5])$ approaches $\phi([0,1])$ from the left $L$ and   $ \phi_1([ 0.5, 1])$ approaches $\phi([0,1])$ from the right $R$. 

Then the curve $\phi_2 :[0,1] \rightarrow Q$,  $\phi_2(x) = \phi(c_2 -x)$ if    $ x \leq c_2 - c_1$ and $\phi_2(x)= \phi_1( \frac{1}{c_2 -c_1}  (x- c_2 +c_1) ) $ for $x > c_2 - c_1$, that is the curve obtained by  joining the portion of $ \phi$ from $c_1$ to $c_2$ to $\phi_1([0,1])$ does not divide $Q$ into 2 regions.
\end{prop}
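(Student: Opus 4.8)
The plan is to prove that $\phi_2$ does not separate $Q$ by exhibiting a closed curve meeting $\phi_2$ transversally in an odd number of points. This suffices: if a two-sided simple closed curve $\gamma$ separated $Q$ into pieces $A$ and $B$ glued along $\gamma$, then for any closed curve $\delta$ transversal to $\gamma$ the set $\delta\cap\gamma$ equals $\partial(\delta\cap A)$, which is even, since $\delta\cap A$ is a compact $1$-manifold. (Note that $\phi_2=\overline{\phi|_{[c_1,c_2]}}*\phi_1$ is a simple closed curve, because $\phi|_{[c_1,c_2]}$ is an arc of the simple curve $\phi$ and $\phi_1$ meets $\phi$ only at its endpoints; ``does not divide $Q$ into two regions'' then just means $Q\setminus\phi_2$ is connected.)

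First I would invoke \autoref{LRsideFormal} to fix a product neighbourhood of $\phi$: there are $q>0$ and an identification of $N\subseteq L\cup\phi\cup R$ with $\phi\times(-q,q)$ carrying $\phi$ to $\phi\times\{0\}$, the $L$-side of $N$ to $\phi\times(-q,0)$, and the $R$-side of $N$ to $\phi\times(0,q)$, where the two sides are the two components of $N\setminus\phi$ (as guaranteed by \autoref{LRsideFormal}). Let $\phi^{+}$ be the simple closed curve corresponding to $\phi\times\{q/2\}$; it lies in $R$ and is disjoint from $\phi$, hence disjoint from the arc $b:=\phi|_{[c_1,c_2]}$ that $\phi_2$ reverses. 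As a point set $\phi_2=b\cup\phi_1$, so $\phi_2\cap\phi^{+}=\phi_1\cap\phi^{+}$. Since $\phi^{+}$ is bounded away from $\phi$, it is bounded away from the endpoints $\phi(c_1),\phi(c_2)$ of $\phi_1$, so after a small perturbation of the interior of $\phi_1$ (leaving the endpoints fixed and keeping $\phi_1$ disjoint from $\phi$ off its endpoints) we may assume $\phi_1$ is transversal to $\phi^{+}$ and to $\phi\times\{\pm q\}$; then $\phi_1^{-1}(\phi\times[-q,q])$ is a finite disjoint union of closed intervals, cutting $\phi_1$ into finitely many sub-arcs.

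The crux is to show $|\phi_1\cap\phi^{+}|$ is odd, which I would do by classifying these sub-arcs. Each sub-arc is disjoint from $\phi\times\{0\}$ except possibly at $\phi(c_1)$ or $\phi(c_2)$, hence lies on a single side of $\phi$ inside $N$; only a sub-arc touching the $R$-side can meet $\phi^{+}$. No sub-arc lies entirely inside $\phi\times[-q,q]$: it would have to pass from the $L$-side to the $R$-side of $\phi$ without crossing $\phi\times\{0\}$, which is impossible since $N\setminus\phi$ is disconnected. The first sub-arc leaves $\phi(c_1)$ into $L$ (the meaning of ``leaves $\phi$ from the left''), so it stays on the $L$-side and contributes $0$. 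The last sub-arc ends at $\phi(c_2)$ and, just before, lies in $R$ (``reaches $\phi$ from the right''), so it runs inside the $R$-side from $\phi\times\{q\}$ to $\phi\times\{0\}$ and crosses $\phi\times\{q/2\}=\phi^{+}$ an odd number of times. Every remaining sub-arc meeting the $R$-side has both endpoints on $\phi\times\{q\}$, so it crosses $\phi^{+}$ an even number of times. Summing, $|\phi_1\cap\phi^{+}|$ is odd, hence so is $|\phi_2\cap\phi^{+}|$, and by the first paragraph $\phi_2$ does not separate $Q$. (If $\phi(c_1)=\phi(c_2)$, so that $\phi_1$ is already a closed curve and $\phi_2=\phi_1$, the same sub-arc count applies verbatim.)

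The only genuine work is the bookkeeping of the last two paragraphs: verifying that a generic perturbation makes $\phi_1$ meet the foliation $\{\phi\times\{t\}\}$ in the stated finite, transversal manner, and that the left/right hypotheses translate precisely into the claimed endpoint behaviour of the first and last sub-arcs. I expect no conceptual obstacle beyond this; in particular the hypothesis that $\phi$ itself does not divide $Q$ is not actually used, since it is automatically forced by the existence of a curve $\phi_1$ joining the two sides of $\phi$ inside $Q\setminus\phi$.
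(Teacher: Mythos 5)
Your argument is correct and takes a genuinely different route from the paper's. The paper first reduces to the degenerate case $c_1=c_2$ by invoking \autoref{LRsideFormalpt3}, which constructs a surface self-homeomorphism carrying the concatenation $\phi_2=\phi|_{[c_1,c_2]}*\phi_1$ to a closed curve that leaves and re-enters $\phi$ at a single point; it then handles that case (\autoref{curvetypeCase}) by a contradiction argument built on the symmetric ``crossing'' relation of \autoref{topocrosslem}: if $\phi_1$ separated $Q$ into a disk $Q_1$ and a complement $Q_2$, then $\phi$ would cross $\phi_1$ and thus meet both regions, but a maximal-interval argument forces $\phi$ into the closure of a single region. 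Your proof bypasses both the reduction and the crossing lemma entirely: you take a parallel push-off $\phi^{+}=\phi\times\{q/2\}$ inside the tubular neighbourhood, show by tracking the second coordinate of $\phi_1$ across the tube that $|\phi_2\cap\phi^{+}|$ is odd after a generic perturbation, and then appeal to the standard fact that a separating simple closed curve has even (indeed zero) mod-$2$ intersection number with every transversal closed curve. This is cleaner and more conceptual---it makes the ``left-in, right-out'' hypothesis directly responsible for the odd parity---at the cost of assuming familiarity with mod-$2$ intersection theory, whereas the paper's proof stays entirely within point-set language (at the cost of the somewhat delicate \autoref{LRsideFormalpt3} and the maximal-interval bookkeeping). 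One thing to flag: the statement does not explicitly say $\phi_1$ is injective, and your parenthetical identification of $\phi_2$ as a \emph{simple} closed curve (and the ``separating simple closed curve has even intersection'' step) uses that. In the paper's use-case $\phi_1$ is always an embedded dipath so this is harmless, but if you want the proposition at full generality you would either add the simplicity hypothesis or phrase the conclusion homologically. Your closing observation that the hypothesis ``$\phi$ does not separate $Q$'' is redundant is also correct: $\phi_1((0,1))$ is a connected subset of $Q\setminus\phi$ meeting both $L$ and $R$, and every component of $Q\setminus\phi$ meets $L\cup R$, so $Q\setminus\phi$ is connected.
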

%%We now prove \autoref{LRside}.

\begin{proof}
%%(of \autoref{LRside}) 
We prove \autoref{LRsideFormal}, \autoref{LRsideFormalpt2}, and \autoref{LRsideFormalpt3}.
We use the following corollary of the tubular neighbourhood theorem (see for instance \cite{SilvaSymGeo}).

\begin{prop}\label{tubneighcor} [corollary of tubular neighbourhood theorem \cite{SilvaSymGeo}]
Given a curve $C'$ embedded in a surface $Q$ there is an open neighbourhood $U$ of $C$ and an open set $V$ in $C' \times \mathbb{R}$ such that there is a diffeomorphism $\phi:U \rightarrow V$ with $\phi(C')= C' \times \{0\}$.
\end{prop}
Let $w:[0,1] \rightarrow C'$ with $w(0)=w(1)$,  $w(0.5)=c_2$ be a parameterization of $C'$. 
We define distance on $C' \times \mathbb{R}$ by $\operatorname{dist}( (a_1,b_1) , (a_2,b_2) ) := ( \operatorname{dist}_Q(a_1,a_2)^2  + |b_1 - b_2 |^2  )^{\frac{1}{2}} $, where $\operatorname{dist}_Q(a_1,a_2)$ is the geodesic distance between $a_1$ and $a_2$ in $Q$.  %%% $\operatorname{dist}( (a_1,b_1) , (a_2,b_2) ) := ( |a_1-a_2|^2  + |w^{-1}(b_1) - w^{-1}(b_2) |^2  )^{\frac{1}{2}} $.
For each $v \in C'$ let $q_v$ be the minimum of 1 and $ \operatorname{sup} \{ q':  \  B(v,q')  \subset V \} $. $q_v$ is continuous in $v$ and $q_v >0 \ \forall v \in C'$. By compactness of $C'$, $q:= \operatorname{min}_{v \in C'}  q_v$ exists and is positive.
Now define $L' = \phi^{-1} (C' \times (-q, 0) )$, $R' = \phi^{-1}(C' \times (0, q)$. 

Define $U'= L' \cup C' \cup R'$.
Let $f: [0,1] \rightarrow Q$ be a curve with $f(x) \notin C'$ for any $x \in [0,1)$. By continuity of $f$ there is some $\beta \in (0,1)$ for which $f((\beta,1]) \subset U'$. 
We claim $ f((\beta,1)) \subset L' $ or $ f((\beta,1]) \subset R' $. 
If $ \phi( f((\beta,1)))$ contains a point in $C' \times (-\infty, 0)$ and a point in $\phi^{-1}(C' \times (0, \infty)$ then by continuity $ \phi( f((\beta,1)))$ contains a point in $C' \times \{0\} $ and hence $f((\beta,1))$
contains a point in $C'$ which is a contradiction.
Thus, either $ \phi( f((\beta,1))) \subset C' \times (-\infty, 0) $ or $ \phi( f((\beta,1))) \subset C' \times ( 0, \infty) $.
If  $ \phi( f((\beta,1))) \subset C' \times (-\infty, 0) $, then $ f((\beta,1)) \subset L' $.
If  $ \phi( f((\beta,1))) \subset C' \times ( 0, \infty) $, then $ f((\beta,1)) \subset R' $.

For each $f_i$ there exists $\beta_i$ for which $f((\beta_i,1]) \in L' $ or  $f((\beta_i,1]) \in R' $.
For each $x \in C'$ define $r'_x$ to be the supremum of all radius $r''_x$ for which the ball $B(x,r''_x)$ of radius $r''_x$   is entirely contained in $U'$ and for which $B(x, r''_x) $ is disjoint from $f_1([0,\beta_1)), f_2([0,\beta_s)), f_3([0,\beta_3)),.. f_{l'}([0,\beta_l)),$ $ h_1([0,1]),h_2([0,1]),.., h_{t'}([0,1]) $. If the supremum does not exist, set $ r'_x = \infty$.
Define $r_x = \operatorname{min} \{1, r'_x \} $. Again $r_x>0$ for all $x \in C'$ and is continuous in $x$.
Since $C'$ is a compact set, $r:= \operatorname{min}_{x \in C'} r_x $ exists and is positive. 
Note each curve $f_i([0,1))$ is disjoint from at least one of $L' \cap \{ B(x, r) : \ x \in C \}, R' \cap \{ B(x, r) : \ x \in C \}$ and each curve $h_i([0,1])$ is disjoint from both $L' \cap \{ B(x, r) : \ x \in C \}, R' \cap \{ B(x, r) : \ x \in C \}$.
%%Define $r= \operatorname{min}\{ \tilde{r}, q \}$.

%%%\myworries{This doesn't quite show that  $B(v,r) \cap L, B(v,r) \cap R$ are the connected components of $B(v,r) \backslash C'$. }
Let us show that by making $r$ smaller if necessary $B(v,r) \backslash C'$ contains two connected components.
\begin{prop}
Given a (piece-wise smooth non-self-intersecting)  curve $f:[0,1] \rightarrow \mathbb{R}^2$ with   $ t_0 \in (0,1) $ there exists $ r >0$ for which $B(f(t_0),r) \backslash f([0,1])$ contains exactly two components.
\end{prop}
\begin{proof}
%%%Define $f(x):= f(\operatorname{frac}(x))$ for $x >1$ or $x<0$. 
Let $[t_0,t_1]$ be an interval in which $f$ is smooth. 

Let $f(t) = f(t_0) + (t-t_0) \nabla f(t_0) + g(t-t_0)  $. By smoothness of $f$, $ \nabla g(t-t_0) $ is bounded for $t \in [t_0,t_1]$ and $g(t-t_0) =o(t-t_0)$. 
Differentiating $  \| f(t) -f(t_0) \|^2 = \| (t-t_0) \nabla f(t_0) + g(t-t_0) \|^2 $ we obtain
  \begin{align*}
\frac{d}{dt} \|f(t) - f(t_0) \|^2 = 2(  \nabla f(t_0) + \nabla g(t-t_0) )^t ((t-t_0) \nabla f(t_0) + g(t-t_0))  \\
= 2 ( \nabla f(t_0) + o(1) )^t ( (t-t_0) \nabla f(t_0) + o(t-t_0) )  \\
= 2 (\nabla f(t_0) + o(1))^t (t-t_0) \nabla f(t_0) +   o(t-t_0) 
  \end{align*} 
For $t$ close enough to $t_0$ the last line is positive. 
This implies that for some $t_2>t_0$,  $\| f(t) -f(t_0) \|^2$ is increasing on $[t_0,t_2]$. 
Likewise, for some $t_3<t_0$, $\| f(t) -f(t_0) \|^2$ is decreasing on $[t_3,t_0]$.

Let $r>0$ be such that $  \| f(t_0) -f(t) \| \geq 2r $ for all $t \in [0,1] \backslash \operatorname{frac}([t_3,t_2])$, where $\operatorname{frac}(x)$ is the fractional part of $x$.
Then   $  \| f(t_0) -f(t_2) \|, \| f(t_0) -f(t_3) \| \geq 2r $. 
Then since $\| f(t_0) -f(t) \| $  is increasing on $[t_0,t_2]$ there is exactly one $t_4 \in [t_0,t_2] $ with $ \| f(t_0) -f(t_4) \|  =r $. 
Likewise, there is exactly one $t_5 \in [t_3, t_0] $ with $ \| f(t_0) -f(t_5) \|  =r $.  So  $f([t_5,t_4])$ forms a simple curve in the closed ball $\bar{B}(f(t_0), r)$ with endpoints on the boundary and $f((t_5,t_4))$ lying in the interior. 
It follows from the Jordan curve theorem that   $B(f(t_0),r) \backslash f([0,1])  =B(f(t_0),r) \backslash f((t_5,t_4)) $ contains exactly two connected components.
\end{proof}
Let $v \in C'$. For small enough $r_0$, $B(v,r_0)$  is diffeomorphic to the open disk $B(0,r_0)$ in $\mathbb{R}^2$ via some diffeomorphism $\psi$ with $\psi(v)= (0,0)$.  
Let $w$ be a paramaterization of $C'$ with $w(0.5)=v$.  
Let $t_1 := \operatorname{inf} \{ t: \ w([t,0.5]) \subset B(v,r_0) \}$ and $t_2 := \operatorname{sup} \{ t: \ w([0.5,t]) \subset B(v,r_0) \}$ that is $[t_1, t_2]$ is a maximal interval for which $w([t_1,t_2]) \subset B(v,r_0) $  by continuity $ t_1 < 0.5 < t_2 $. 
Choose $r_1>0$   less than the distance from $ v $ to  $C' \backslash  w((t_1,t_2))$ and $r_1 < \operatorname{dist}_Q(v, w(t_1)), \operatorname{dist}_Q(v, w(t_2))   $.  From the previous proposition there exists $r_2>0$ such that $B_{\mathbb{R}^2}(\psi(v), r_2 ) \backslash \psi( w([t_1,t_2])  )$ contains exactly two connected components.  By making $r$ smaller than $r_1$ and $r_2$ if necessary we get that for any $0 < \hat{r} \leq r$, $B_{\mathbb{R}^2}(\psi(v), \hat{r} ) \backslash \psi ( C' )$ contains exactly two connected components.  %%Then $\psi(C' \cap B(v,r_0) )$ 
Thus,  $B(v, \hat{r} ) \backslash C'$ contains exactly two connected components. 

Define $L= L' \cap \{ B(x, r) : \ x \in C \}$, $R= R' \cap \{ B(x, r) : \ x \in C \}$. Each curve  $f: [0,1] \rightarrow Q$  with $f(x) \notin C'$ for any $x \in [0,1)$ satisfies $ f((\beta,1)) \subset L $ or $ f((\beta,1]) \subset R $. Each curve $f_i([0,1))$ is disjoint from at least one of $L,R$ and each curve $h_i([0,1])$ is disjoint from both $L,R$.

Since $\phi(v) = \{v \} \times \{ 0\}$, $ \phi( B(v, \hat{r}) )$ intersects both $ \{ v \} \times  (-\infty,0)  $ and $ \{ v \} \times  (0, \infty)  $.  Recall $B(v, \hat{r}) \subset L \cup R \cup C'$, so  $B(v, \hat{r})$  intersects both $L$ and $R$. Since there is no path from $L$ to $R$ in $ L \cup R \cup C' $ one component of $B(v, \hat{r}) \backslash C'$ is contained in $L$ and the other is contained in $R$.

For each $v \in C'$ let $y_v$ be the supremum of $\{ y'_v \geq 0: \ \{ v \} \times ( - y'_v, y'_v ) \subset \phi(B(v,r)) \}$. Again $y_v$ is positive and continuous in $v$ so $y:=\operatorname{min}_{v \in C'} y_v$ exists and is positive.
Then $ C' \times (-y,0) \subset \phi(L)$ and $ C' \times (0,y) \subset \phi(R) $.
Define the curves $f_L,f_R$ to be parameterizations of $\phi^{-1}(C' \times \{\frac{-y}{2}\})$  and $\phi^{-1}(C' \times \{\frac{y}{2}\})$ respectively.

Lastly given a curve $h:[0,1] \rightarrow Q$ be any curve that reaches $C'$ from the right at a point $c_2=h(1)$ on $C'$, leaves $C'$ from the left at   $c_1=h(0)$ and $C'_{c_1,c_2}$ be a subcurve of $C'$ with endpoints $c_1$ and $c_2$. Let $ j:[0,1] \rightarrow C'_{c_1,c_2}$ be a parameterization of $C'_{c_1,c_2}$ and denote $ \bar{j}:[0,2] \rightarrow Q$ by $\bar{j}(t)= h(t)$ for $t \in [0,1]$ and $\bar{j}(t)=j(t-1)$ for $t >1$. 
Informally speaking, we ``slightly shift" all points in $L \cup C' \cup R$ to the right while keeping $h([0,1]) \cap L$ to the left of $C'$.
 Let $\phi(x)=(\phi_1(x), \phi_2(x))$, $\phi^{-1}(x) = (\phi^{-1}_1(x), \phi^{-1}_2(x)  )$.
% Define $\mu_1: \mathbb{R} \rightarrow \mathbb{R}_+$ to be any piecewise smooth function satisfying $ | \mu_1(t) | \leq \phi^{-1}( h(x) )_2  $ for any $x $ such that $ \phi( h(x))_1 \in \phi(C'_{c_1,c_2})_1   $.  $|\mu_1(t)| \leq r$ for all $t$ and $ $

% Define  $\mu: C' \times \mathbb{R} $ by $ \mu(x,y)=(x,y) $ if $x \notin C'_{c_1,c_2}$   and $ \mu(x,y)=(x,y+ ) $ $\hat{j}:[0,1]$ by $\hat{j}(t)=  \phi^{-1} \mu(\phi(j(t)) $
%Let $t_1$ be such that $h(t_1)= c_2 $, $ t_2= \operatorname{inf} \{ t:  \ h( (t,t_1)) \in R \  \ \ \phi( h( (t,t_1)) )_2 \subset (0,y) \} $.  Then $ \phi( h(t_2) )_2 =y $.   %%% Since $h(t_2) \notin C'$, $\phi_2(h(t_2))=r$.
%%Define $ \alpha: \mathbb{R}_+ \rightarrow \mathbb{R}_+ $ by $\alpha(t)= \frac{r}{2} +  \frac{1}{2}(r-t)$ for
%%%Let $\phi(c_1)= (b_1$

%/% Define $\bar{w}_\alpha: C' \times [0,y) \rightarrow C' \times [0,y) $ by $\bar{w}((v,b))= (v,  (1 - \alpha|0.5- \bar{w}^{-1}(v) | ) b +  \alpha|0.5- \bar{w}^{-1}(v) | \frac{b+y}{2}  )$
%%Define $  \hat{h}: [0,2] \rightarrow Q $ by $\hat{h}(t) = \bar{j}(t)$ for $t \leq t_2$ and $\hat{h}(t)= \phi^{-1}( \phi_1(\bar{j}(t),  j^{-1}(\phi_1(\bar{j}(t)) \phi_2(\bar{j}(t)) + ( 1 - j^{-1}( \phi_1(\bar{j}(t) ) ) )(\frac{r}{2}   +  \phi_2(\bar{j}(t)) ) )$.

%%$\hat{h}([t_2,2])$ is homeomorphic $\bar{j}([t_2,2])$ and $\hat{h}([0,t_2])=\bar{j}([0,t_2])$ and $\hat{h}(t_2)=\bar{j}(t_2)$, $ \hat{h}(0)=\hat{h}(2)=\bar{j}(2)=\bar{j(0)}  $.  Hence $\hat{j}$ is homeomorphic to $\bar{j}$.  
Let   $\gamma : C' \rightarrow (-y,0)$ be any continuous function such that $\gamma(c_2)=0$ and for any $t \in [0,1] $ for which $ \phi(h(t)) \in C' \times (-y,0)$,   $  ( \phi(h(t))_2 ) < \gamma( \phi(h(t))_1) < 0 $. 
Informally $\gamma$ is a curve lying to the right of $ \phi(h([0,1])) \cap  C' \times (-y,0)  $ and to the left of $[0,1] \times \{ 0 \}$.
Such $\gamma$ exists for instance define $ -\gamma(t)$ to be half of the  minimum of the distance from the point $ (t,0)  $ to  $ \phi(h([0,1])) \cap  C' \times (-y,0) $ and  $y$.

Define $ \bar{\gamma}: C' \times (-y,y) \rightarrow C' \times (-y,y) $ as follows. For $(a,b) \in C' \times (-y,y)  $ if $b < \frac{\gamma(a)}{4}$ define  $\bar{\gamma}(a,b)=(a,b)$.
If $ \frac{\gamma(a)}{4}  \leq b < 0 $, define $\bar{\gamma}(a,b)=(a, \frac{\gamma(a)}{4} + 2(b- \frac{\gamma(a)}{4})  )$.
If $ 0 \leq b \leq \frac{-\gamma(a)}{2} $, define $\bar{\gamma}(a,b)=(a, \frac{-\gamma(a)}{4} + \frac{b}{2} )$.
If $b \geq \frac{-\gamma(a)}{2}$ $\bar{\gamma}(a,b)=(a,b)$.
Informally, $\bar{\gamma}$  shifts $C' \times (-y,y)$ to the right while keeping  $\phi(h([0,1])) \cap  C' \times (-y,0) $ left of $C' \times \{ 0 \}$.

%%%Define $\tilde{\gamma}: C' \times (-y,y) \rightarrow C' \times (-y,y)$ by

Define $\hat{\gamma}:Q \rightarrow Q$ by $\gamma $ by $\hat{\gamma}(v)=v$  if $v \notin \phi^{-1}( C' \times (-y,y) )$ and  $\hat{\gamma}(v)= \phi^{-1} \bar{\gamma}( \phi(v)  ) $ if $v \in \phi^{-1} ( C' \times (-y,y) )$.
Note $\hat{\gamma}$ is a homeomorphism from  $ \phi^{-1}( C' \times (-y,y) )$ to $ \phi^{-1}( C' \times (-y,y) )$  and   from   $Q \backslash \phi^{-1}( C' \times (-y,y) ) $ to $Q \backslash \phi^{-1}( C' \times (-y,y) ) $.  
Further, $\hat{\gamma}$ agrees on the boundary of $ \phi^{-1}( C' \times (-y,y) )$ and  $Q \backslash \phi^{-1}( C' \times (-y,y) ) $, that is for sequence $ \{a_i \}_{i=1}^\infty$  converging to a boundary point $a$ of $ \phi^{-1}( C' \times (-y,y) )$ (resp $Q \backslash \phi^{-1}( C' \times (-y,y) ) $)  the sequence  $\{ \hat{\gamma}( a_i ) \}_{i=1}^\infty$ converges to $\hat{\gamma}(a)$.
Thus, $\hat{\gamma}$ is a homeomorphism on $Q$, in fact, it turns out to be a continuous deformation.

Define $  \hat{h}: [0,2] \rightarrow Q $  by $\hat{h}(t)= \hat{\gamma}(  \bar{j}(t) )$. Then $\hat{\gamma}$ is a homeomorphism mapping $\hat{h}([0,2])$ to $\bar{j}([0,2])$.  % if
%if  $ \phi( \bar{j}(t) ) \notin C' \times [0,y) $ $h(t)= \phi^{-1} \bar{w}_1( \phi( \bar{j}(t) ) )$  if  $ \phi( \bar{j}(t) ) \in C' \times [0,y) $.
% Denote $\psi_\alpha : Q \rightarrow \psi_\alpha(Q)$ defined by $\psi_\alpha(v)= v$ if $v \notin C' \times [0,y)$ $ \psi_\alpha(v)= \phi^{-1} \bar{w}_\alpha( \phi(v) ) $ if $v \in C' \times [0,y)$.
% Then $\psi_\alpha$ restricted to  $Q \backslash (\phi( C' \times [0,y) \backslash c_2)  $ and $\phi( C' \times [0,y) )$ are homeomorphisms.
% Define  $F(t,\alpha)= \psi_\alpha(\bar{j}(t))$ so $F(t,0)=\bar{j}(t)$  $F(t,1)=h(t)$.
% Since

% Let $ 0=z_0 < z_1 < z_2 < ... < z_\ell =1  $ be such that $j(z_i, z_{i+1})$ is either in 
% Hence $\hat{h}([0,1])$ is homotopic to $\bar{j}([0,1])$

\end{proof}

%%\begin{comment}
The actual statement of the tubular neighbourhood theorem involves first defining the normal fibre $N_x $  as the quotient $ T_xQ/T_xC $  where $T_xQ$ and $T_xC$ are the tangent plane and tangent to the curve $C$ at $x$ and the normal bundle $NX$ as  $ \{ (x,v) : \ x \in C \ v \in N_x  \} $.
\begin{theorem}[tubular neighbourhood theorem]
There are open sets $U$ in $Q$ containing $C$ and $V$ in $NX$ such that there is a diffeomorphism  $\gamma:U \rightarrow V$.
\end{theorem}
Let us quickly show how the version of the tubular neighbourhood theorem in \autoref{tubneighcor} follows from the tubular neighbourhood theorem.

By orientability each point $x \in Q$ has a normal vector $n(x)$ and $n$ is continuous. 
%%The Darboux frame at each point on the curve is constructed as follows.
We may parameterize $C'$ with a function $ \psi:[0,\beta] \rightarrow C' $ with derivative $\psi'(x)=1$ for some $\beta$. Define $v(x)$ to be of unit norm and positively orthogonal to  $ n(x), \phi'(x) $ that is $ n(x)^t v(x)= ( \phi'(x))^t v(x) = 0 $ and $ \begin{bmatrix}
n(x) & \psi'(x) & v(x)
\end{bmatrix}  
$  has determinant 1.  By the inverse function theorem, $v(x)$ is continuous. $n(x), \psi'(x), v(x)$ is a basis for $\mathbb{R}^3$ known as the Darboux frame.
Since $v(x)$ is orthogonal to $n(x)$ it lies in the tangent plant $T_xQ$ since $v(x)$ is orthogonal to $\phi'(x)$, $v(x), \phi'(x)$ is a basis for $T_xQ$. Thus, $ N_x = T_xQ/T_xC $ is diffeomorphic to $ \{ a v(x): \ a \in \mathbb{R}   \}$ which is diffeomorphic to $\mathbb{R}$. Thus, $NX$ is diffeomorphic to $C' \times \mathbb{R}$.
%%\end{comment}
%% then there is a vector $v(x)$ such that $T_xQ$

To prove \autoref{curvetype}, we first prove the following special case.

\begin{prop}\label{curvetypeCase}
Let $Q$ be an orientable surface and $\phi:[0,1] \rightarrow Q$ a closed curve not dividing $Q$ into 2 regions with disjoint subsets $L,R$ ``on each side" of $\phi$ as in \autoref{LRsideFormal}. Let $c \in [0,1)$, with $c_2 \geq c_1$. % and let $  p_1 = \phi (c_1) $, $  p_2 = \phi (c_2) $. 
Suppose that $\phi_1: [0,1] \rightarrow Q$ is a curve with $\phi_1(0)= \phi(c) \ \ \phi_1(1) =  \phi(c) $, $\phi((0,1))$ is disjoint from $\phi([0,1])$ and the curve $ \phi_1([0, 0.5])$ approaches $\phi([0,1])$ from the left $L$ and   $ \phi_1([ 0.5, 1])$ approaches $\phi([0,1])$ from the right $R$. 

Then $\phi_1$ does not divide $Q$ into 2 regions.
\end{prop}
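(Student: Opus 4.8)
The plan is to argue by a mod-$2$ intersection-number computation. Write $C := \phi([0,1])$; by the hypothesis that $L,R$ are placed as in \autoref{LRsideFormal}, $C$ is an embedded (non-self-intersecting) closed curve, and by assumption it does not divide $Q$ into two regions, i.e.\ it is non-separating. I will also use that in our setting $Q$ is a closed orientable surface, so that the mod-$2$ intersection pairing $\langle\cdot,\cdot\rangle\colon H_1(Q;\mathbb{Z}/2)\times H_1(Q;\mathbb{Z}/2)\to\mathbb{Z}/2$ is defined and, for transverse representatives, is computed by counting intersection points modulo $2$; and that a simple closed curve on a closed orientable surface is separating if and only if it is null-homologous over $\mathbb{Z}/2$ (if $Q$ has boundary one doubles $Q$ or works with $H_1(Q,\partial Q)$, since $C$ and $\phi_1$ lie in the interior). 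A homology-free variant would instead cut $Q$ along $C$ to a connected surface $\widehat Q$ whose two boundary circles are distinct by orientability, observe that $\phi_1$ lifts to an arc of $\widehat Q$ joining those two circles, and argue that such an arc cannot separate $\widehat Q$; I will carry out the homological version.

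The core step is the local picture at $p := \phi(c) = \phi_1(0) = \phi_1(1)$. First, $\phi_1\cap C=\{p\}$, since $\phi_1((0,1))$ is disjoint from $C$ and $\phi_1(\{0,1\})=\{p\}$. Next I would invoke \autoref{LRsideFormal} for $C$ to obtain a small ball $B(p,r)$ for which $B(p,r)\setminus C$ has exactly the two components $B(p,r)\cap L$ and $B(p,r)\cap R$ and $B(p,r)\cap C$ is an embedded sub-arc of $C$ through $p$. Because $\phi_1$ leaves $C$ from the left at $p$, there is $\beta_0>0$ with $\phi_1((0,\beta_0))\subseteq L$, so the germ of $\phi_1$ at $t=0^{+}$ lies in $B(p,r)\cap L$; because $\phi_1$ reaches $C$ from the right at $p$, the germ at $t=1^{-}$ lies in $B(p,r)\cap R$. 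Thus near $p$ the curve $\phi_1$ is made of two germs sitting on opposite sides of the arc $B(p,r)\cap C$, so $\phi_1$ crosses $C$ at $p$ --- topologically, and after an arbitrarily small perturbation supported near $p$, transversally --- and $p$ is the only intersection point.

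It then follows that $\langle[\phi_1],[C]\rangle$ is computed by the single transverse point $p$, hence equals $1$ in $\mathbb{Z}/2$; in particular $[\phi_1]\neq 0$ in $H_1(Q;\mathbb{Z}/2)$ (and, incidentally, $[C]\neq0$, consistent with $C$ non-separating). If $\phi_1$ divided $Q$ into two regions $A$ and $B$, then $\overline{A}$ would be a compact subsurface of $Q$ with $\partial\overline{A}=\phi_1$, so $[\phi_1]=[\partial\overline{A}]=0$ in $H_1(Q;\mathbb{Z}/2)$, a contradiction. Therefore $\phi_1$ does not divide $Q$ into two regions.

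The step I expect to be the main obstacle is making the local analysis at $p$ fully rigorous: verifying that the two one-sided germs of $\phi_1$ at $p$ genuinely realize a single topological crossing of $C$ contributing exactly $1$ to the mod-$2$ count, and --- if $\phi_1$ is not already assumed simple --- reducing to the simple case so that ``separating $\Leftrightarrow$ null-homologous'' applies verbatim. The remaining ingredients, namely that characterization for simple closed curves on closed orientable surfaces and the transverse computation of the mod-$2$ intersection form, are standard and can be quoted.
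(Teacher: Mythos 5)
Your argument is correct and takes a genuinely different route from the paper's. Both proofs hinge on the same geometric observation---$\phi_1$ meets $\phi$ in exactly the one point $p=\phi(c)$, and there it genuinely \emph{crosses} because the outgoing germ sits in $L$ and the incoming germ sits in $R$---but they exploit it with very different machinery. You convert the single crossing into $\langle[\phi_1],[\phi]\rangle=1$ in the mod-$2$ intersection form, deduce $[\phi_1]\neq0$ in $H_1(Q;\mathbb{Z}/2)$, and quote the standard fact that a simple closed curve on a closed orientable surface separates iff it is $\mathbb{Z}/2$-null-homologous. The paper instead stays in point-set topology: it defines a symmetric \emph{crossing} relation (\autoref{topocrosslem}), uses it to show $\phi$ must visit both complementary regions $Q_1,Q_2$, and then derives a contradiction from a maximal-interval argument on $\phi^{-1}(Q_1)$, using that $\phi\cap\phi_1$ is a single point so the interval endpoints are forced to $p$. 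Your homological argument is shorter, more standard, and sidesteps a delicate point in the paper's proof (the paper asserts that one of the two regions $Q_1$ must be an open disk, which is false in general for separating curves on higher-genus surfaces, and indeed that claim is not needed for the maximal-interval argument either); the paper's argument in exchange stays elementary and does not invoke homology, which matches the overall style of the topological appendix. The only point you should nail down is the one you flag yourself: that $\phi_1$ is an embedded (simple) closed curve---this is implicit in the phrase ``divides $Q$ into two regions'' and in the application (where $\phi_1$ is a concatenation of a dipath with an arc of $C^1$)---so that ``separating $\Leftrightarrow$ null-homologous'' applies verbatim and $\overline{A}$ really is a subsurface with boundary $\phi_1$. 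The transversality at $p$ is also fine: the two one-sided germs lie in distinct components of $B(p,r)\setminus\phi$ by \autoref{LRsideFormal}, so a $C^0$-small perturbation supported in $B(p,r)$ makes the single intersection transverse without creating new ones.
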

\begin{proof}
Suppose for a contradiction that $\phi_1$ divides $Q$ into 2 regions. It's a well-known result that one of the regions $Q_1$ must be homeomorphic to an open disk. Let $Q_2$ be the other region.

There exists a small radius $r_1$ for which the ball $B(\phi(c),r)$ of radius $r_1$ about $\phi(c)$ such that $B(\phi(c),r_1)$ is homeomorphic to an open disk. Let $r$ be as in \autoref{LRsideFormal} and define $r_2= \operatorname{min}\{ r,r_1 \}$.

\begin{definition}
  Given 2 curves  $f_1,f_2: [0,1] \rightarrow Q$ on a surface $Q$, with $f_1(0.5)=f_2(0.5)$ and $f_1(x) \neq f_2(y) \ \ \forall x,y \in ([0,1] \backslash \{0.5 \} )  $, that is they intersect only at $f_1(0,5)$, we say that $f_1$ \emph{crosses}  $f_2$ at $f_1(0.5)$ 
  if there exists $r_0>0$ such that for all $r \leq r_0$ 
  %there exist $\beta_1  \in [0,0.5)$, $\beta_2 \in (0,5, 1]$ such that $\| f_1(\beta_1)-f_1(0.5) \| = r = \| f_1(\beta_2)-f_1(0.5) \|$,   $\| f_1(x) - f_1(0.5) \| < r$ for $  x \in (\beta_1, \beta_2) $ and 
 such that $f_2$ intersects both regions of $ B(f_1(0.5), r)  \backslash f_1([0,1])    $, where  $B(p,r)$ 
%%%  $\bar{B}(p,r)$ 
  is the open ball around $p$ of radius $r$.
  
\end{definition}

\begin{lemma}\label{topocrosslem}
For two curves $f_1,f_2$ on a surface $Q$ and $p$ a point on both curves, $f_1$ crosses $f_2$ at $p$ if and only if $f_2$ crosses $f_1$ at $p$.
\end{lemma}
\begin{proof}
Let $f_1(b_1)=p=f_2(b_2)$. Let $L_1$, $R_1$ and $r_1$ (resp $L_2$, $R_2$ and $r_2$) be the left, right and radius respectively for $f_1$ (resp. $f_2$ ) as guaranteed by \autoref{LRsideFormal}. Define $r= \operatorname{min}\{ r_1,r_2\}$. 
\begin{claim}
$f_1$ crosses $f_2$ at $p$ if and only if for all $r \geq r_0>0 $ none of $L_1 \cap B(p,r_0) $, $R_1 \cap B(p,r_0) $, $L_2 \cap B(p,r_0) $, $R_2 \cap B(p,r_0) $ is contained in another.
\end{claim}
\begin{proof}
Suppose that $f_1$ crosses $f_2$. Let  $r \geq r_0>0 $.
Let $t_L,t_R \in \mathbb{R}$ be such that $ f_2(t_L) \in L_1 \cap B(p,r_0) , \ f_2(t_R) \in R_1 \cap B(p,r_0)$. 
Since $R_1 \cap B(p,r_0), L_1 \cap B(p,r_0)$ are open, there exists $r_L, r_R$ be such that $B(f_2(t_L), r_L) \subset  L_1 \cap B(p,r_0) $ and $ B(f_2(t_R), r_R) \subset R_1 \cap B(p,r_0) $.  
Since $B(f_2(t_L), r_L) \cap L_2, B(f_2(t_L), r_L) \cap R_2, B(f_2(t_R), r_R) \cap L_2, B(f_2(t_R), r_R) \cap R_2 \neq \emptyset $, none of $L_1 \cap B(p,r_0) $, $R_1 \cap B(p,r_0) $, $L_2 \cap B(p,r_0) $, $R_2 \cap B(p,r_0) $ is contained in another.

Conversely, suppose that for any $0< r_0 \leq r$ none of $L_1 \cap B(p,r_0) $, $R_1 \cap B(p,r_0) $, $L_2 \cap B(p,r_0) $, $R_2 \cap B(p,r_0) $ is contained in another.
Then for any $0<r_0 \leq r$, if $f_2$ does not intersect $L_1 \cap B(p,r_0)$, then $L_1 \cap B(p,r_0)$ is connected in $  B(p,r_0) \backslash f_2$, that is $L_1 \cap B(p,r_0)$ is contained in one of the two components $L_2 \cap B(p,r_0) $, $R_2 \cap B(p,r_0) $ of $  B(p,r_0) \backslash f_2$. Hence $f_2$ intersects $L_1 \cap B(p,r_0)$, likewise $f_2$ intersects $R_1 \cap B(p,r_0)$.
Hence $f_2$ crosses $f_1$.

%%%Let $\phi_1, \phi_2$ be diffeomorphisms from $L_1 \cup f_1 \cup R_1$ to  $f_1 \times (-r'_1,r'_1)$ and $L_2 \cup f_2 \cup R_2$ to  $f_2 \times (-r'_2,r'_2)$ for some $r'_1,r'_2 >0$.

\end{proof}
From the previous claim it's clear that crossing is a symmetric relation.
\end{proof}

Define  %% $\phi_2(x)= \phi(x+0.5-c)$ for $x < 0.5-c$ and 
$\phi_2(x)= \phi( \operatorname{frac}( x-0.5+c ) ) $, where $\operatorname{frac}(x)= x - \lfloor x \rfloor$ is the fractional value of $x$ and  %%for $x \geq 0.5 -c$, 
%%$\phi_3(x) = \phi_1(x+0.5)  $ for $ x < 0.5$ and 
$ \phi_3(x)= \phi_1( \operatorname{frac}( x-0.5 )  )$, %%%%for $x \geq 0.5$, 
that is, $\phi_2,\phi_3$ are reparameterized versions of $\phi$ and $\phi_1$. 
For some $ 0 <\beta_1 < \beta_2 < 1$ $\phi_1(x) \in L$ for $x \in (0, \beta_1)$ and   $\phi_1(x) \in R$ for $x \in ( \beta_2,1)$.

Define $\beta_1' = \operatorname{min} \{ \beta_1, 0.5 \}+0.5$ $\beta_2'=\operatorname{max} \{\beta_2, 0.5\} $.
Then $\phi_3((0.5, \beta_1' )) \subset L$ and $ \phi_3(( \beta_2',0.5 )) \subset R $. 
%%Since  $\bar{B}(\phi(c),r) \backslash \phi([0,1]) $
Since $L \cup R$ covers $\bar{B}(\phi(c),r) \backslash \phi([0,1])$, this implies that $\phi_3$ crosses $\phi_2$.

By \autoref{topocrosslem} $\phi_2$ crosses $\phi_3$.  %%% thus, there exists $t'$ for which $\phi_2(t') \in Q_1$. 
Let $L_{\phi_3}, R_{\phi_3}$ be the left and right of $\phi_3$ as in \autoref{LRsideFormal}. 
Since $L_{\phi_3}, R_{\phi_3}$ are connected, $L_{\phi_3}, R_{\phi_3}$ belong to different regions of $Q \backslash \phi_3$. 
Since $\phi_2$ crosses $ \phi_3 $, there exists $t_0$ for which $\phi(t_0) \in Q_1$ and $t_1$ for which $\phi(t_1) \in Q_2$. %%Define $\phi(1+t) = \phi(t)$ for $t \in [0,1]$. Let $t_1 = \operatorname{sup}\{ t: \ \phi \}  $ 
Let $t_2 = \operatorname{inf}  \{ a \in [0,1] : \ \exists b \in (a,1] \ \text{s.t.} \ \phi((a,b)) \subset Q_1  \} $, $t_3= \operatorname{sup} \{ b \in [t_2,1]: \ \text{s.t.} \phi((t_2,b)) \subset Q_1 \} $, that is $[t_2,t_3]$ is a maximal interval for which $\phi([t_2,t_3]) \subset Q_1$.
It follows $ \phi(t_2),\phi(t_3)$ lie on the boundary of $Q_1$, that is on $\phi_1$. This implies $t_2,t_3 \in \{ 0,1 \}$. Since $t_2< t_3$ $t_2=0$ and $t_3=1$. This implies that $\phi([0,1]))$ lies in $ Q_1 \cup \{ \phi(0) \} $  contradicting that there exists $t_1$ for which $\phi(t_1) \in Q_2$.
%%% which is homeomorphic to an open disk. 
%%% By the Jordan curve theorem, $\phi([0,1]))$ divides $ Q_1 \cup \{ \phi(0) \} $ into two regions $Q_{11}, Q_{12}$. 
%%% One of these regions has boundary contained in $\phi[0,1]$ and this region $Q_{1i}$ is thus not reachable from $Q \backslash Q_{1i}$, which contradicts that $Q \backslash \phi([0,1])$ is connected.

\end{proof}

\begin{proof} (of \autoref{curvetype})
Let $f:[0,1] \rightarrow C'$ be a parameterization of $C'$  and let $c= f^{-1}(c_1)$. Note that by \autoref{LRsideFormalpt3} the curve $\phi_2$ is homeomorphic to a curve $\phi_3$ that enters $\phi$ from the right and leaves $\phi$ from the left at the same point $f(c)$.  By \autoref{curvetypeCase}, $\phi_3$ does not divide $Q$ into 2 regions. Thus, $\phi_2$ does not divide $Q$ into 2 regions.

\end{proof}
\bibliography{main}

\end{document}